\newcommand{\lorenzo}[1]{\todo[color=green]{L: #1}}
\newcommand{\sla}[1]{\todo[color=green]{S: #1}}
\newcommand{\ignore}[1]{}
\newcommand{\F}{\mathcal{F}}
\newcommand{\G}{\mathcal{G}}
\newcommand{\N}{\mathbb{N}}
\newcommand{\Z}{\mathbb{Z}}
\newcommand{\set}[1]{\{#1\}}
\newcommand{\setof}[2]{\set{#1 \mid #2}}
\newcommand{\card}[1]{\left|#1\right|}
\newcommand{\prettyexists}[2]{\exists_{#1} \, #2}
\newcommand{\prettyforall}[2]{\forall_{#1} \, #2}
\newcommand{\source}[1]{\textsc{source}(#1)}
\newcommand{\target}[1]{\textsc{target}(#1)}
\newcommand{\runs}[1]{\textsc{Runs}(#1)}  
\newcommand{\runord}{\unlhd}
\newcommand{\fin}{\textsc{fin}}
\newcommand{\lin}{\textsc{Lin}}
\newcommand{\lab}{\textsc{label}}
\newcommand{\parimage}{\textsc{pi}}
\newcommand{\reachvas}{\textsc{Reach}(\mathrm{VAS})}
\newcommand{\secreachvas}{\textsc{SecReach}(\mathrm{VAS})}
\newcommand{\reachvass}{\textsc{Reach}(\mathrm{VASS})}
\newcommand{\secreachvass}{\textsc{SecReach}(\mathrm{VASS})}
\newcommand{\reach}{\textsc{Reach}}
\newcommand{\secreach}{\textsc{SecReach}}
\newcommand{\vas}{VAS\xspace}
\newcommand{\vases}{{\vas}es\xspace}
\newcommand{\vass}{VASS\xspace}
\newcommand{\vasses}{{\vass}es\xspace}
\renewcommand{\sec}[2]{\textsc{sec}_{#1}(#2)}
\newcommand{\proj}[2]{\pi_{#1}(#2)}
\newcommand{\cc}{\textsc{cc}}
\newcommand{\trans}[1]{\stackrel{#1}{\longrightarrow}}
\newcommand{\Nplus}{\N_{\geq 0}}	
\newcommand{\zeroel}[1]{0_{#1}}
\newcommand{\sepsep}{ }
\newcommand{\vasreach}{\vas reachability\xspace}
\newcommand{\vaslangs}{\vas languages\xspace}
\renewcommand{\enspace}{}
\begin{document}

\title{Separability of Reachability Sets of Vector Addition Systems}

\author{Lorenzo Clemente\inst{1}
\and Wojciech Czerwi\'nski\inst{1}
\and S{\l}awomir Lasota\inst{1}
\and Charles Paperman\inst{2}}
\institute{University of Warsaw \and University of T{\"u}bingen}

\maketitle

\begin{abstract}
	Given two families of sets $\F$ and $\G$,
	the $\F$ separability problem for $\G$ asks whether for two given sets $U, V \in \G$ there exists
	a set $S \in \F$, such that $U$ is included in $S$ and $V$ is disjoint with $S$.
	We consider two families of sets $\F$:
	modular sets $S \subseteq \N^d$, defined as unions of equivalence classes modulo some natural number $n \in \N$,
	and unary sets.
	Our main result is decidability of modular\sepsep and unary{\sepsep}separability for the class $\G$ of reachability sets of Vector Addition Systems,
	Petri Nets, Vector Addition Systems with States, and for sections thereof.
\end{abstract}


\section{Introduction}
In this paper we mainly investigate separability problems for sets of vectors from $\N^d$.
We say that a set $U$ is \emph{separated from} set $V$ by a set $S$ if $U \subseteq S$ and $V \cap S = \emptyset$.
For two families of sets $\F$ and $\G$,
the \emph{$\F$-separability problem for $\G$}
asks for two given sets $U, V \in \G$
whether $U$ is separated from $V$ by some set from $\F$.
Concretely, we consider $\F$ to be modular sets or unary sets,
and $\G$ to be reachability set of Vector Addition Systems, or generalizations thereof.

\paragraph{Motivation.}
The separability problem is a classical problem in theoretical computer science.
It was investigated most extensively in the area of formal languages, for $\G$ being the family of all regular word languages.
Since regular languages are effectively closed under complement, the $\F$-separability problem
is a generalization of the $\F$-characterization problem, which asks whether a given language belongs to $\F$.
Indeed, $L \in \F$ if and only if $L$ is separated from its complement by some language from $\F$.
Separability problems for regular languages attracted recently a lot of attention,
which resulted in establishing the decidability of $\F$-separability for the family $\F$ of separators being
the piecewise testable languages~\cite{DBLP:conf/icalp/CzerwinskiMM13,DBLP:conf/mfcs/PlaceRZ13}
(recently generalized to finite ranked trees \cite{Schmitz:Goubault-Larrecq:ICALP2016}),
the locally and locally threshold testable languages~\cite{DBLP:conf/fsttcs/PlaceRZ13},
the languages definable in first order logic~\cite{DBLP:journals/corr/PlaceZ14},
and the languages of certain higher levels of the first order hierarchy~\cite{DBLP:conf/icalp/PlaceZ14},
among others.

Separability of nonregular languages attracted little attention till now.
The reasons for this are twofold. First, for regular languages one can use standard algebraic tools, like syntactic monoids,
and indeed most of the results have been obtained with the help of such techniques.
Second, some strong intractability results have been known already since 70's, when Szymanski and Williams proved
that regular{\sepsep}separability of context-free languages is undecidable~\cite{DBLP:journals/siamcomp/SzymanskiW76}.
Later Hunt~\cite{DBLP:journals/jacm/Hunt82a} generalized this result: he showed that $\F$-separability of context-free languages
is undecidable for every class $\F$ which is closed under finite boolean combinations
and contains all languages of the form $w\Sigma^*$ for $w \in \Sigma^*$. This is a very weak condition, so it seemed
that nothing nontrivial can be done outside regular languages with respect to separability problems.
Furthermore, Kopczy\'{n}ski has recently shown that regular{\sepsep}separability is undecidable
even for languages of visibly pushdown automata~\cite{DBLP:journals/corr/Kopczynski15a},
thus strengthening the result by Szymanski and Williams.
On the positive side, piecewise testable{\sepsep}separability has been shown decidable
for context-free languages, languages of Vector Addition Systems (\vaslangs),
and some other classes of languages~\cite{DBLP:conf/fct/CzerwinskiMRZ15}.
This inspired us to start a quest for decidable cases beyond regular languages.

To the best of our knowledge, beside~\cite{DBLP:conf/fct/CzerwinskiMRZ15},
separability problems for \vaslangs have not been investigated before.

%
%

\paragraph{Our contribution.}
In this paper, we make a substantial step towards solving regular{\sepsep}separability of \vaslangs.
Instead of \vaslangs themselves (i.e., subsets of $\Sigma^*$),
in this paper we investigate their commutative closures,
or, alternatively, subsets of $\N^d$ represented as reachability sets of \vases, \vases with states, or Petri nets.
A \vasreach set is just the set of configurations of a \vas which can be reached from a specified initial configurations.
Towards a unified treatment, instead of considering separately \vases, \vases with states, and Petri nets,
we consider \emph{sections} of \vasreach sets (abbreviated as \vas{\sepsep}sections below),
which turn out to be expressive enough to represent sections of \vases with states and Petri nets,
and thus being a convenient subsuming formalism.
A \emph{section} of a set of vectors $X\subseteq \N^d$ is the set obtained by first fixing a value for certain coordinates,
and then projecting the result to the remaining coordinates.
For example, if $X$ is the set of pairs $\setof {(x, y)\in \N^2}{x \textrm{ divides } y}$,
then the section of $X$ obtained by fixing the first coordinate to $3$ is the set $\set {0, 3, 6, \dots}$.
It can be easily shown that \vas sections are strictly more general than \vasreach sets themselves,
and they are equiexpressive with sections of \vases with states and Petri nets.

We study the separability problem of \vas sections by simpler classes, namely, modular and unary sets.
A set $X\subseteq \N^d$ is \emph{modular} if there exists a modulus $n\in\N$ s.t. $X$ is closed under the congruence modulo $n$ on every coordinate,
and it is \emph{unary} if there exists a threshold $n \in \N$ s.t. it is closed under the congruence modulo $n$ above the threshold $n$ on every coordinate.
Clearly, \vas sections are more general than both unary and modular sets,
and unary sets are more general than modular sets.
Moreover, unary sets are tightly connected with commutative regular languages,
in the sense that the Parikh image%
\footnote{The Parikh image of a language of words $L \subseteq \{a_1, \dots, a_k \}$ is the subset of $\N^k$ obtained by counting occurrences of letters in $L$.}
of a commutative regular language is a unary set, and vice versa,
the inverse Parikh image of a unary set is a commutative regular language.
As our main result,
we show that the modular and unary{\sepsep}separability problems are decidable for \vas sections (and thus for sections of \vases with states and Petri nets).
Both proofs use similar techniques,
and invoke two semi-decision procedures:
the first one (positive) enumerates witnesses of separability,
and the second one (negative) enumerates witnesses of nonseparability.
A separability witness is just a modular (or unary) set,
and verifying that it is indeed a separator easily reduces to the \vasreach problem.
Thus, the hard part of the proof is to invent a finite and decidable witness of nonseparability,
i.e., a finite object whose existence proves that none of infinitely many modular (resp.~unary) sets is a separator.
Our main technical observation is that two nonseparable \vasreach sets always admit
two \emph{linear} subsets thereof that are already nonseparable.

%

From our result, thanks to the tight connection between unary sets and commutative regular languages mentioned above,
we can immediately deduce decidability of regular{\sepsep}separability for \emph{commutative closures of \vaslangs},
and \emph{commutative regular}{\sepsep}separability for \vaslangs.
This constitutes a first step towards determining the status of regular{\sepsep}separability for languages of \vases.


\paragraph{Related research.}
Choffrut and Grigorieff have shown decidability of
separability of rational relations by recognizable relations in $\Sigma^* \times \N^d$~\cite{DBLP:journals/ipl/ChoffrutG06}.
Rational subsets of $\N^d$ are precisely the semilinear sets,
and recognizable (by morphism into a monoid) subsets of $\N^d$ are precisely the unary sets.
Thus, by ignoring the $\Sigma^*$ component,
one obtains a very special case of our result,
namely decidability of the unary{\sepsep}separability problem for semilinear sets.
Moreover, since modular sets are subsets of $\N^d$ which are recognizable by a morphism into a monoid which happens to be a group,
we also obtain a new result, namely,
decidability of separability of rational subsets of $\N^d$ by subsets of $\N^d$ recognized by a group.

From a quite different angle, our research seems to be closely related to the \vasreach problem.
Leroux~\cite{DBLP:conf/lics/Leroux09} has shown a highly nontrivial result:
the reachability sets of two \vases are disjoint if, and only if, they can be separated by a semilinear set.
In other words, semilinear{\sepsep}separability for {\vas} reachability sets is equivalent to the \vas (non-)reachability problem. 
This connection suggests that modular and unary{\sepsep}separability are interesting problems in themselves, enriching our understanding of \vases.
Finally, we show that \vasreach reduces to unary{\sepsep}separability,
thus the problem does not become easier by considering the simpler class of unary sets as opposed to semilinear sets.
For modular{\sepsep}separability we have a weaker complexity lower bound, i.e. \textsc{ExpSpace}-hardness,
by a reduction from control state reachability for \vasses.


\section{Preliminaries}\label{sec:prelim}

\paragraph{Vectors.}

By $\N$ and $\Z$ we denote the set of natural and integer numbers, respectively.
For a vector $u = (u_1, \ldots, u_d) \in \Z^d$ and for a coordinate $i \in \{1, \ldots, d\}$,
we denote by $u[i]$ its $i$-th component $u_i$.
The zero vector is denoted by $0$.
The order $\leq$ and the sum operation $+$ naturally extend to vectors pointwise.
Moreover, if $n \in \Z$, then $nu$ is the vector $(nu_1, \dots, nu_d)$.
These operations extend to sets element-wise in the natural way:
For two sets of vectors $U, V \subseteq \Z^d$ we denote by $U + V$ its Minkowski sum $\{u+v \mid u \in U, v \in V\}$.
For a (possibly infinite) set of vectors $S \subseteq \Z^d$, let $\lin(S)$ and $\lin^{\geq 0}(S)$
be the set of \emph{linear combinations} and \emph{non-negative linear combinations} of vectors from $S$, respectively, i.e.,
\begin{align*}
	\lin(S) 			&= \{ a_1 v_1 + \ldots + a_k v_k \mid v_1, \dots, v_k \in S, a_1, \dots, a_k \in \Z \}\enspace, \textrm{ and } \\
	\lin^{\geq 0}(S)	&= \{ a_1 v_1 + \ldots + a_k v_k \mid v_1, \dots, v_k \in S, a_1, \dots, a_k \in \N \}\enspace.
\end{align*}
When the set $S = \set{v_1, \dots, v_k}$ is finite,
we alternatively write $\lin(v_1, \dots, v_k)$ instead of $\lin(\set{v_1, \dots, v_k})$,
and similarly for $\lin^{\geq 0}(v_1, \dots, v_k)$.

\paragraph{Modular, unary, linear, and semilinear sets.}

Two vectors $x, y \in \Z^d$ are \emph{$n$-modular equivalent}, written $x \equiv_n y$,
if, for all $i \in \{1, \ldots, d\}$, we have $x[i] \equiv y[i] \mod n$.
Moreover, two non-negative vectors $x, y \in \N^d$ are \emph{$n$-unary equivalent},
written $x \cong_n y$, if $x \equiv_n y$ and $x[i] \geq n \iff y[i] \geq n$ for all $i \in \{1, \ldots, d\}$.
A $d$-dimensional set $S \subseteq \N^d$ is \emph{modular} if there exists a number $n \in \N$,
s.t. $S$ is a union of $n$-modular equivalence classes.
\emph{Unary} sets $S\subseteq \N^d$ are defined similarly w.r.t. $n$-unary equivalence classes.

A set $S \subseteq \N^d$ is \emph{linear}
if it is of the form $S = \{b\} + \lin^{\geq 0}(p_1, \ldots, p_k)$ for
some \emph{base} $b \in \N^d$ and some \emph{periods} $p_1, \ldots, p_k \in \N^d$.
A set is \emph{semilinear} if it is a finite union of linear sets.
Note that a modular set is also unary (since $\cong_n$ is finer than $\equiv_n$),
and that unary set is in turn a semilinear set,
which can be presented as a finite union of linear sets in which all
the periods are parallel to the coordinate axes,
i.e., they have exactly one non-zero entry.

\paragraph{Separability.}
For $S, U, V \subseteq \N^d$, we say that $S$ \emph{separates} $U$ from $V$ if $U \subseteq S$ and $V \cap S = \emptyset$.
The set $S$ is also called a \emph{separator} of $U,V$.
For a family $\F$ of sets, we say that $U$ is $\F${\sepsep}separable from $V$
if $U$ is separated from $V$ by a set $S \in \F$.
In this paper, the set of separators $\F$ will be the modular sets and the unary ones.
Since both classes are closed under complement, the notion of $\F${\sepsep}separability is symmetric:
$U$ is $\F${\sepsep}separable from $V$ iff $V$ is $\F${\sepsep}separable from $U$.
Thus we use also a symmetric notation, in particular we say that $U$ and $V$ are $\F${\sepsep}separable instead
of saying that $U$ is $\F${\sepsep}separable from $V$.
For two families of sets $\F$ and $\G$,
the \emph{$\F${\sepsep}separability problem for $\G$}
asks whether two given sets $U,V \in \G$ are $\F${\sepsep}separable.
In this paper we mainly consider two instances of $\F$, namely modular sets and unary sets, and thus we speak of 
\emph{modular{\sepsep}separability} and \emph{unary{\sepsep}separability} problems, respectively.

\paragraph{Vector Addition Systems.}

A $d$-dimensional \emph{Vector Addition System} (\vas) is a pair $V = (s, T)$, where $s \in \N^d$
is the \emph{source} configuration and $T \subseteq_\fin \Z^d$ is the set of finitely many \emph{transitions}.
A \emph{partial run} $\rho$ of a \vas $V = (s, T)$ is a sequence 
$$(v_0, t_0, v_1), (v_1, t_1, v_2), \ldots, (v_{n-1}, t_{n-1}, v_n) \in \N^d \times T \times \N^d$$ 
such that for all $i \in \{0, \ldots, n-1\}$ we have $v_i + t_i = v_{i+1}$. The 
\emph{source} of this partial run is the configuration $v_0$ and the 
\emph{target} of this partial run is the configuration $v_n$, we write 
$\source{\rho} = v_0$, 
$\target{\rho} = v_n$.
The \emph{labeling} of $\rho$ is the sequence $t_0 \ldots t_{n-1} \in T^*$, we write $\lab(\rho) = t_0 \ldots t_{n-1}$.
For a sequence $\alpha \in T^*$ and a partial run $\rho$ such that $\lab(\rho) = \alpha$, $\source{\rho} = u$ and $\target{\rho} = v$
we write $u \trans{\alpha} v$ to denote this unique partial run.
A partial run $\rho$ of $(s, T)$ with $\source{\rho} = s$ is called a \emph{run}.
The set of all runs of a \vas $V$ is denoted as $\runs{V}$.
The \emph{reachability set} $\reach(V)$ of a \vas $V$ is the set of targets of all its runs; the sets $\reach(V)$
we call \emph{\vasreach sets} in the sequel.
The family of all \vasreach sets we denote as $\reach(\text{\vas})$.
\begin{example}\label{ex:vasreach}
Consider a \vas $V = (s, T)$, for a source configuration $s = (1, 0, 0)$
and a set of transitions $T = \{(-1, 2, 1), (2, -1, 1)\}$.
One easily proves that
\[
\reach(V) \ = \ \setof{(a, b, c) \in \N^2}{a + b = c + 1 \ \wedge \ a - b \equiv 1 \mod 3}.
\]
\end{example}

\paragraph{Vector Addition Systems with states.}

A $d$-dimensional \emph{\vas with states} (\vass) is a triple $V = (s, T, Q)$,
where $Q$ is a finite set of \emph{states}, $s \in Q \times \N^d$ is the \emph{source} configuration
and $T \subseteq_\fin Q \times \Z^d \times Q$ is a finite set of \emph{transitions}.
Similarly as in case of \vases, a \emph{run} $\rho$ of a VASS $V = (s, T, Q)$ is a sequence
\[
(q_0, v_0, s_0, q_1, v_1), \ldots, (q_{n-1}, v_{n-1}, s_{n-1}, q_n, v_n)
\in Q \times \N^d \times \Z^d \times Q \times \N^d
\]
such that $(q_0, v_0) = s$ and for all $i \in \{0, \ldots, n-1\}$
we have $(q_i, s_i, q_{i+1}) \in T$  and $v_i + s_i = v_{i+1}$.
We write $\target{\rho} =(q_n, v_n)$.
The \emph{reachability set} of a \vass $V$ in state $q$ is
\[
\reach_q(V) \ = \ \setof{v \in \N^d}{(q, v) = \target{\rho} \text{ for some run } \rho}\enspace.
\]
The family of all such reachability sets of all \vasses we denote as $\reachvass$.
\begin{example}[cf.~\cite{DBLP:journals/tcs/HopcroftP79}] \label{ex:vassexponent}
	Let $V$ be a 3-dimensional \vass with two states, $p$ and $p'$, the source configuration $(p, (1,0,0))$, and four transitions:
	\begin{align*} \label{eq:3vas}
	(p, (-1,1,0), p), \quad (p, (0,0,0), p'), \quad (p', (2,-1,0), p'), \quad (p', (0,0,1), p). 
	\end{align*}
	Then $\reach_p(V) = \setof{(a,b,c) \in \N^3} {1 \leq a+b \leq 2^c}$.
\end{example}


\section{Sections}

\vasreach sets are central for this paper.
However, in order to make this family of sets more robust,
we prefer to consider the slightly larger family of \emph{sections} of \vasreach sets.
The intuition about a section is that we fix values on a subset of coordinates in vectors,
and collect all the values that can occur on the other coordinates.
For a vector $u \in \N^d$ and a subset $I \subseteq \{1, \ldots, d\}$ of coordinates,
by $\proj I u \in \N^{|I|}$ we denote the \emph{$I$-projection} of $u$, i.e., the vector obtained from $u$ by removing coordinates not belonging to $I$.
The projection extends element-wise to sets of vectors $S \subseteq \N^d$, denoted $\proj I S$.
For a set of vectors $S \subseteq \N^d$, a subset $I \subseteq \{1, \ldots, d\}$, and a vector $u \in \N^{d-|I|}$,
the \emph{section} of $S$ w.r.t. $I$ and $u$ is the set
\begin{align*}
	\sec {I, u} S := \proj I {\{ v \in S \mid \proj {\{1, \ldots, d\} \setminus I} v = u \}} \subseteq \N^{|I|}\enspace.
\end{align*}
We denote by $\secreachvas$ the family of all sections of \vasreach sets,
which we abbreviate as \emph{\vas{\sepsep}sections below}.
Similarly, the family of all sections of \vass-reachability sets we denote by $\secreachvass$.
\begin{example}\label{ex:sec}
Consider the \vas $V$ from Example~\ref{ex:vasreach}.
For $I = \{1, 2\}$ and $u = 7 \in \N^1$ we have
\[
\sec{I, u} {\reach(V)} \  = \ \{(0, 8), (3, 5), (6, 2)\}\enspace.
\]
\end{example}

Note that in a special case of $I = \{1, \ldots, d\}$, when $u$ is necessarily the empty vector, $\sec {I, u} S = S$.
Thus $\reach(\text{\vas})$ is a subfamily of $\secreach(\text{\vas})$, and likewise for {\vass}es.
We argue that \vas{\sepsep}sections are a more robust class than \vasreach sets.
Indeed, as shown below \vas{\sepsep}sections are closed under positive boolean combinations,
which is not the case for \vasreach sets.


Reachability sets of \vases are a strict subfamily of reachability sets of \vases with states,
which in turn are a strict subfamily of sections of reachability sets of \vases.
However, when sections of reachability set are compared, there is no difference between \vases and \vases with states,
which motivates considering sections in this paper.
These observations are summarized in the following propositions:
\begin{proposition}\label{prop:vas-expressivity}
$\reachvas \subsetneq \reachvass \subsetneq \secreachvas$.
\end{proposition}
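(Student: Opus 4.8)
The plan is to treat the two inclusions first, since they are routine, and then handle the two strictness assertions with genuine witnesses. For $\reachvas \subseteq \reachvass$ I would view a \vas $(s,T)$ as a single-state \vass, replacing each $t \in T$ by the transition $(q,t,q)$; then $\reach_q$ of this \vass equals $\reach$ of the \vas. For $\reachvass \subseteq \secreachvas$ I would simulate states by a one-hot encoding: given a \vass with state set $Q$, add $|Q|$ fresh coordinates and turn each \vass transition $(p,t,p')$ into a \vas transition acting by $t$ on the original coordinates while decrementing the coordinate of $p$ and incrementing the coordinate of $p'$. The sum of the fresh coordinates is invariantly $1$ and they stay non-negative, so exactly one of them equals $1$ at all times, namely the one encoding the current state; hence a transition leaving $p$ is enabled precisely when the encoded state is $p$, and the simulation is faithful. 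Taking the section $\sec{I,u}{\reach(V')}$, with $I$ the original coordinates and $u$ the one-hot vector of $q$, recovers $\reach_q(V)$, proving membership in $\secreachvas$.

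For the strictness $\reachvas \subsetneq \reachvass$ I would exhibit the L-shape $S = \{(m,0) \mid m \ge 0\} \cup \{(0,n) \mid n \ge 0\}$. It is a \vass reachability set: from a branching initial state, one branch loops on $(1,0)$ and the other on $(0,1)$, and both branches feed a common final state $q$, whose reachable configurations are exactly $S$. It is not a \vas reachability set, by monotonicity: if some \vas had $\reach = S$, then its source $s$ lies on one axis, say $s = (k,0)$. Picking a reachable point $(k',0)$ strictly further out on that axis ($k' > k$, which lies in $S$) together with a run $s \trans{\beta} (0,n)$ for some $n \ge 1$, monotonicity lets $\beta$ fire from $(k',0) \ge s$ and reach $(k'-k,\,n)$, a configuration with both coordinates positive and hence outside $S$ --- a contradiction. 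The key tool here is the monotonicity lemma for \vases.

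For the strictness $\reachvass \subsetneq \secreachvas$ I would rely on a pumping property of \vass reachability sets: by Dickson's lemma together with pigeonholing on states, every infinite $S \in \reachvass$ contains an infinite arithmetic progression $b + \N p$ with $p \ne 0$. Indeed, a sufficiently long accepting run repeats a state at two positions with comparable data vectors, yielding a cycle of non-negative effect $\delta \ne 0$ that can be inserted arbitrarily often, the suffix remaining enabled by monotonicity. It then suffices to realize an infinite \vas section that contains no such progression, for instance $\{2^n \mid n \ge 0\}$, whose consecutive gaps grow without bound and which therefore contains no $b + \N p$ with $p \ne 0$.

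I expect this last realization to be the main obstacle. The doubling computation behind Example~\ref{ex:vassexponent} is only a \emph{weak} one (it produces every value up to $2^c$), so the plan is to augment it with a coordinate that is non-zero exactly when the doubling undershoots, and then take the section fixing that coordinate to $0$. This turns the weak computation into an exact one and yields a \vas reachability set whose section is precisely the set of powers of two; by the pumping property above, such a set cannot be a \vass reachability set, establishing strictness.
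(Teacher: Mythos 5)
Your inclusions and your first strictness argument are essentially fine, with one bug: in your one-hot encoding of $\reachvass \subseteq \secreachvas$, a self-loop $(p,t,p)$ turns into a \vas transition whose effect on all state coordinates is zero, so it can fire in \emph{every} state and the simulation is not faithful (this matters: both loops of Example~\ref{ex:vassexponent} are self-loops). The standard fix is the paper's folklore construction, which splits each \vass transition into two \vas transitions passing through an auxiliary per-transition coordinate; alternatively, subdivide self-loops with a fresh intermediate state. Your L-shape-plus-monotonicity proof of $\reachvas \subsetneq \reachvass$ is correct and is actually more elementary than the paper's, which instead observes that $\reach_p$ of Example~\ref{ex:vassexponent} is not semilinear while $3$-dimensional \vasreach sets always are~\cite{DBLP:journals/tcs/HopcroftP79}.

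The genuine gap is the second strictness: your plan cannot work, because the pumping property you prove for $\reachvass$ holds \emph{verbatim} for \vas sections, so no \vas section can equal $\set{2^n \mid n \geq 0}$ and no ``undershoot-detecting coordinate'' can exist. Indeed, let $U = \sec{I,u}{\reach(W)}$ be infinite, and pick runs $\rho_1, \rho_2, \ldots$ of $W$ whose targets have fixed part $u$ and pairwise distinct $I$-projections. Since $\runord$ is a wqo (Proposition~\ref{prop:run-wqo}), there are $\rho \runord \rho'$ among them with $\target{\rho} \leq \target{\rho'}$; the difference $\delta := \target{\rho'} - \target{\rho}$ is non-negative, nonzero, and zero on the coordinates outside $I$. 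By Corollary~\ref{cor:summing-runs}, $\target{\rho} + m\delta$ is reachable for every $m \in \N$, and all these targets still have fixed part $u$, so $U$ contains the infinite arithmetic progression $\proj{I}{\target{\rho}} + \N\,\proj{I}{\delta}$ with $\proj{I}{\delta} \neq 0$. Thus your proposed invariant (``contains no infinite arithmetic progression'') cannot separate $\secreachvas$ from $\reachvass$, and the final construction you hope for is impossible; concretely, in Example~\ref{ex:vassexponent} the leftover of a lazy run sits inside coordinate $a$ itself, indistinguishable from honestly doubled content, which is why fixing extra coordinates to $0$ never filters out the undershoot. The paper separates the two classes by a dimension-sensitive argument instead: applying the state-elimination construction to Example~\ref{ex:vassexponent} and sectioning yields the $2$-dimensional set $S = \setof{(a,b) \in \N^2}{0 \leq a \leq 2^b}$, which is not semilinear, whereas reachability sets of $2$-dimensional {\vass}es are always semilinear~\cite{DBLP:journals/tcs/HopcroftP79}; hence $S \in \secreachvas \setminus \reachvass$. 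Note that plain non-semilinearity would not suffice either ($\reach_p(V)$ of Example~\ref{ex:vassexponent} is non-semilinear yet lies in $\reachvass$); it is the interplay with dimension that does the work, and some such finer invariant is what your approach is missing.
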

\begin{proof}
	In order to prove strictness of the first inclusion,
	consider the \vass $V$ from Example~\ref{ex:vassexponent}.
	The reachability set $\reach_p(V)$ is not semilinear;
	on the other hand the reachability sets of of 3-dimensional \vases are always semilinear~\cite{DBLP:journals/tcs/HopcroftP79}.

	Now we turn to the second inclusion.
	It is folklore that for a $d$-dimensional \vass $V$ with $n$ states and $m$ transitions
	one can construct a $(d+n+m)$-dimensional \vas $V'$ simulating $V$.
	Among the new coordinates, $n$ correspond to states and $m$ to transitions.
	For a transition $t = (q, v, q')$ of $V$ there are two transitions in $V'$:
	the first one subtracts $1$ on the coordinate corresponding to state $q$
	and adds $1$ on the coordinate corresponding to $t$;
	the second one subtracts $1$ on the coordinate corresponding to $t$,
	adds $1$ on the coordinate corresponding to $q'$,
	and adds $v$ on the original $d$ coordinates.
	Finally, if $(q_0, v_0)$ is the initial configuration of $V$,
	then the initial configuration of $V'$ is a copy of $v_0$ on the original $d$ dimensions,
	equals $1$ on the coordinate corresponding to $q_0$,
	and equals $0$ on the rest of the new coordinates.
	Then the reachability set $\reach_q(V)$ equals the section of $\reach(V')$
	obtained by fixing the coordinate corresponding to $q$ to $1$
	and all other new coordinates to $0$.

	For strictness of the second inclusion, apply the above-mentioned transformation to the \vass $V$ 
	from Example~\ref{ex:vassexponent}, in order to obtain a 9-dimensional \vas $V'$. 
	The section of $\reach(V')$ that fixes the second original coordinate to $0$,
	the coordinate corresponding to state $p$ to $1$,
	and all the other new coordinates to $0$ is $S := \setof {(a,b) \in \N^2} {0 \leq a \leq 2^b}$.
	This 2-dimensional set is not semilinear, while reachability sets of 2-dimensional {\vass}es are 
	always semilinear~\cite{DBLP:journals/tcs/HopcroftP79}.
	Thus $S$ is not a 2-dimensional \vasreach set.  
\end{proof}

\begin{proposition}\label{prop:vas-expressivity}
$\secreachvas = \secreachvass$.
\end{proposition}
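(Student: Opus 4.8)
The plan is to prove the two inclusions $\secreachvass \subseteq \secreachvas$ and $\secreachvas \subseteq \secreachvass$ separately. The second inclusion is immediate: every \vas is a \vass with a single state, so every \vasreach set is a \vass-reachability set, and hence every section of a \vasreach set is a section of a \vass-reachability set. Thus $\secreachvas \subseteq \secreachvass$ requires no work beyond this observation.

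For the interesting inclusion $\secreachvass \subseteq \secreachvas$, the plan is to reuse the simulation already described in the proof of Proposition~\ref{prop:vas-expressivity} (the expressivity proposition). Given a $d$-dimensional \vass $V$ with $n$ states and $m$ transitions, that construction yields a $(d+n+m)$-dimensional \vas $V'$ such that $\reach_q(V)$ equals a particular section of $\reach(V')$, namely the one fixing the coordinate of state $q$ to $1$ and all other new coordinates to $0$. I would first record that fact in the slightly more general form needed here: for \emph{any} section of $\reach_q(V)$, determined by a set $I$ of retained original coordinates and a fixing $u$ of the remaining original coordinates, the same set is obtained as a section of $\reach(V')$. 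Concretely, starting from a \vass-section $\sec{I,u}{\reach_q(V)}$, I take $V'$ as above and form the section of $\reach(V')$ that (i) fixes the $n+m$ new coordinates exactly as in the expressivity proof (state $q$ to $1$, everything else to $0$), and (ii) fixes the original coordinates outside $I$ according to $u$, while projecting onto $I$.

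The key step is to argue that composing the section induced by the simulation with the section $\sec{I,u}{\cdot}$ of the \vass yields again a single section of the \vas reachability set. This holds because both operations fix a disjoint set of coordinates to prescribed values and then project away the fixed coordinates: fixing the $n+m$ simulation coordinates forces $\reach(V')$ to coincide (on the original $d$ coordinates) with $\reach_q(V)$, and then additionally fixing the coordinates in $\{1,\dots,d\}\setminus I$ to $u$ and projecting onto $I$ reproduces exactly $\sec{I,u}{\reach_q(V)}$. Since fixing values on one block of coordinates and subsequently fixing values on a disjoint block is the same as fixing values on the union of the two blocks in a single step, the result is literally of the form $\sec{I',u'}{\reach(V')}$ for a suitable $I'$ and $u'$, i.e.\ a member of $\secreachvas$.

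The main obstacle is purely bookkeeping: one must verify that the coordinate sets involved are genuinely disjoint and that the two successive sections compose into one, rather than any genuine mathematical difficulty. In particular I would check that the simulation faithfully reflects reachability in the state $q$ precisely when the state- and transition-coordinates take the designated values, so that no spurious configurations are introduced or lost when the section is taken; this is already guaranteed by the correctness of the construction invoked in the expressivity proposition. Combining the two inclusions gives $\secreachvas = \secreachvass$.
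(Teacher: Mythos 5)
Your proposal is correct and follows essentially the same route as the paper: the easy inclusion via viewing a \vas as a one-state \vass, and the other inclusion by reusing the folklore $(d+n+m)$-dimensional simulation from the preceding expressivity proposition, taking a single section of $\reach(V')$ that simultaneously fixes the new state/transition coordinates (state $q$ to $1$, the rest to $0$) and the original coordinates outside $I$ according to the sectioning vector. The paper states this combined section directly, whereas you make the composition-of-sections bookkeeping explicit, but the argument is the same.
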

\begin{proof}
	One inclusion is obvious, since \vasses are more general than \vases,
	and the same holds when taking sections.
	For the other directions,
	consider a \vass $V$ and a section thereof $S := \sec {I,v}{\reach_q(V)}$.
	Reconsider the folklore construction of a \vas $V'$ that simulates $V$
	(cf. the proof of the previous Proposition~\ref{prop:vas-expressivity}).
	The section of the reachability set of $\reach(V')$
	that fixes the coordinate corresponding to $q$ to $1$,
	all the other new coordinates to $0$,
	and all the original coordinates not belonging to the set $I$ as in vector $v$,
	equals $S$. 
	\qed
\end{proof}

\begin{remark}
In the similar vein one shows that reachability sets of Petri nets include $\reachvas$ and are included in $\reachvass$.
Therefore, as long as sections are considered, there is no difference between \vases, Petri nets, and \vasses.
In consequence, our results apply not only to {\vas}es, but to all the three models.
\end{remark}

We conclude this section by proving a closure property of \vas sections.
\begin{proposition}\label{prop:union:intersection}
	The family of \vas sections is closed under positive boolean combinations.
\end{proposition}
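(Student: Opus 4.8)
The plan is to establish closure under binary intersection and binary union; closure under arbitrary positive boolean combinations then follows by a straightforward induction. Throughout I would work with {\vasses} rather than {\vases}, which is legitimate by the equality $\secreachvas = \secreachvass$ established above, and is convenient because the control states let me sequence the construction into clean phases. So fix two sets of the same dimension $k$, say $S_1 = \sec{I_1, u_1}{\reach_{q_1}(V_1)}$ and $S_2 = \sec{I_2, u_2}{\reach_{q_2}(V_2)}$ with $\card{I_1} = \card{I_2} = k$, where $V_i$ is $d_i$-dimensional with vector source $s_i$.

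For intersection I would build a {\vass} $W$ whose coordinates split into three blocks: a copy of the $d_1$ coordinates of $V_1$, a copy of the $d_2$ coordinates of $V_2$, and $k$ fresh \emph{record} coordinates $X$. The source sets the first two blocks to $s_1$ and $s_2$, and $X$ to $0$. Using the states, $W$ runs in three phases: it first simulates $V_1$ on its block, then simulates $V_2$ on its block, and finally enters a \emph{comparison} phase containing, for each $j \in \set{1, \ldots, k}$, a single transition that simultaneously decrements the $j$-th coordinate in $I_1$, decrements the $j$-th coordinate in $I_2$, and increments $X[j]$. I then take the section that fixes the $I_1$- and $I_2$-coordinates of the first two blocks to $0$, fixes the coordinates outside $I_1$ to $u_1$ and those outside $I_2$ to $u_2$, and projects onto $X$. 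This gives exactly $S_1 \cap S_2$: forcing both the $I_1$- and the $I_2$-coordinates down to $0$ makes the number of comparison steps on coordinate $j$ equal both the value produced by $V_1$ and that produced by $V_2$ there, so these values agree and their common value is recorded in $X[j]$, while the $u_i$-constraints certify membership of the recorded vector in each $S_i$.

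For union a single section cannot fix the internal coordinates of $V_1$ and $V_2$ differently in two alternatives, so I first normalise the end configurations. Let $\alpha$ be the configuration of the $V_1$-block that equals $u_1$ outside $I_1$ and $0$ on $I_1$, and define $\beta$ analogously for $V_2$. The {\vass} $W$ now branches at the start into two alternatives via fresh states. In the first it simulates $V_1$, transfers each coordinate of $I_1$ into $X$ (decrement, increment $X[j]$), and then applies one ``jump'' transition with vector $\beta - s_2$ to drive the otherwise untouched $V_2$-block from $s_2$ to $\beta$ (a valid single step, since $s_2 + (\beta - s_2) = \beta \ge 0$). The second alternative is symmetric, jumping the $V_1$-block to $\alpha$ instead. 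The section fixing the $V_1$-block to $\alpha$, the $V_2$-block to $\beta$, and projecting onto $X$ then yields $S_1 \cup S_2$, because the first alternative contributes exactly $S_1$ (the constraints on the $V_1$-block certify $x \in S_1$ and the jump makes the $V_2$-block match $\beta$) and the second contributes exactly $S_2$.

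The delicate point, and the conceptual heart of the argument, is the comparison-and-record gadget for intersection: {\vases} cannot test a coordinate for zero, yet here one must verify that two vectors computed independently by $V_1$ and $V_2$ coincide. This succeeds because equality is never tested mid-run but is instead pinned down at the end by the section, which demands reachability to a configuration whose relevant coordinates vanish; the purely additive comparison transitions can zero out both copies precisely when they started equal, and the record block $X$ preserves the common value through the final projection. The remaining verifications — that each phase realises its intended semantics and that the normalising jumps are sound — are routine.
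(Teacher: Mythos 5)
Your proof is correct, and its core construction differs genuinely from the paper's. For union the two arguments coincide in spirit (nondeterministically choose which system to simulate); you merely replace the paper's preprocessing --- padding to a common dimension, reordering coordinates, and normalizing the sectioning vectors to $0$ so that one section fits both branches --- by disjoint coordinate blocks plus explicit jump transitions that drive the idle block to the value fixed by the section. For intersection, however, the paper runs $V_1$ forward while duplicating its output, then runs $V_2$ \emph{backward} on the second copy and uses the section to check that this reversed run reaches the source configuration of $V_2$, finally projecting the second copy away; you instead run both systems forward on disjoint blocks and reconcile their outputs afterwards by synchronized decrements into a fresh record block, the section certifying that both residues end at exactly $0$. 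Both constructions exploit the same key fact --- coordinates fixed by a section amount to an exact test applied after the run, which is how one sidesteps the absence of zero tests in \vases --- but yours is symmetric in the two inputs, avoids run reversal, and handles arbitrary sectioning vectors and dimensions without normalization, at the cost of $k$ extra record coordinates, whereas the paper's reversal trick keeps the output on the original coordinates of $V_1$ and reuses the simplifying assumptions it needs again in Sections~\ref{sec:modular-vas-sep} and~\ref{sec:unary-vas-sep}. Your appeal to $\secreachvas = \secreachvass$ is legitimate, since that equality is established before this proposition; do make explicit, though, that every phase change, and the state at which the final section is taken, is enabled only at the accepting states $q_1$, $q_2$ of the simulated {\vasses}, since this gating is precisely what certifies membership in $\reach_{q_1}(V_1)$ and $\reach_{q_2}(V_2)$.
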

\begin{proof}
	We only sketch the proof. 
	For closure under union, we just use nondeterminism to guess which \vas to run.
	Dealing with sections is straightforward since
	1) we can assume w.l.o.g.~that sections are done w.r.t. the 0 vector,
	2) by padding coordinates we can assume that the two input \vases have the same dimension,
	and 3) by reordering coordinates we can guarantee that the coordinates that are projected away appear all together on the right
	(the same simplifying assumptions will be made in Sections~\ref{sec:modular-vas-sep} and \ref{sec:unary-vas-sep};
	cf. the details just before Lemma~\ref{lem:modular:nonsep:witness}).
	For closure under intersection, we proceed under similar assumptions,
	and the intuition is to run the first \vas forward in two identical copies,
	and then to run backward the second \vas only in the second copy,
	using a section to make sure that the second \vas is accepting,
	and then project away the second copy.
	\ignore{ 
	We first prove that sections are closed under union.
	By Proposition~\ref{prop:vas-expressivity},
	it suffices to show that for two \vas{\sepsep}sections
	$R := \sec {I, u}{\reach(U)}, S := \sec {J, v}{\reach(V)} \subseteq \N^d$
	there exists a \vas with states $W$ and a section thereof $T := \sec {L, w} {\reach_q(W)} \subseteq \N^d$
	s.t. $T = R \cup S$.
	Notice that $\card I = \card J = d$,
	and by padding with extra coordinates which have always value 0 we can assume that $U$ and $V$ have the same dimension $d + k$,
	and by reordering the coordinates we can guarantee that the extra $k$ components appear at the end,
	and thus $I = J = \set{1, \dots, d}$.
	Moreover, we can further assume that $u = v = 0 \in \N^k$ (by removing the vector $u$ or $v$ at the end of the run).
	The idea is to let $W$ select nondeterministically whether to run $U$ or $V$,
	and then move to a final control state $q$.
	Thus $W$ also has dimension $d + k$, and it suffices to take $L := I (= J)$ and $w = 0 \in \N^k$.
	
	We can reason with similar assumptions for intersections of two \vas{\sepsep}sections
	$R := \sec {I, 0}{\reach(U)}, S := \sec {J, 0}{\reach(V)} \subseteq \N^d$
	with $I = J = \set{1, \dots, d}$ and where $U$ and $V$ have dimension $d + k$.
	We build a \vas with states $W$ of dimension $2d + k$,
	which first (in control location $p$) behaves as $U$ with the first $d$ coordinates duplicated.
	That is, $(x, y) \in \N^d \times \N^k \in \reach(U)$ if, and only if, $(x, x, y) \in \N^{2d} \times \N^k \in \reach_p(W)$.
	Then, $W$ behaves as $V$ but backwards (i.e., all transitions are taken with the opposite sign) on the last $d+k$ components (in control location $q$).
	That is, $(x', x, y) \in \N^{2d} \times \N^k \in \reach_q(W)$
	if, and only if, $(x, y) \in \N^d \times \N^k$ can reach $(x', 0)$ in $V$.
	Thus, if we take the section of $W$ defined as $T := \sec {L, w}{\reach_q(W)} \subseteq \N^d$
	with $L = \set{1, \dots, d}$ and $w = (x_0, y_0) \in \N^d \times \N^k$
	we clearly obtain $T = R \cap S$.
	}
	\qed
\end{proof}


\section{Results}
As our main technical contribution, we prove decidability of the modular\sepsep and unary{\sepsep}separability problems 
for the class of sections of \vasreach sets.

\begin{theorem}\label{thm:modular-vas-sep}
	The modular{\sepsep}separability problem for \vas{\sepsep}sections is decidable.
\end{theorem}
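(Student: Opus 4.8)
The plan is to run two semi-decision procedures in parallel, one searching for a modular separator and the other for a witness that no such separator exists; since the two classes of outcomes are exhaustive and mutually exclusive, their combination decides the problem.

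The positive procedure is the straightforward one. Observe first that $U$ and $V$ are modular-separable if and only if there is a modulus $n$ for which the set of residues $\{u \bmod n \mid u \in U\}$ is disjoint from $\{v \bmod n \mid v \in V\}$ in $(\Z/n\Z)^d$; the separator is then the modular set induced by the residues of $U$. Hence I would dovetail over all $n \in \N$ and all candidate residue sets $R \subseteq (\Z/n\Z)^d$, checking for each whether the induced modular set $S$ satisfies $U \subseteq S$ and $V \cap S = \emptyset$. Both checks reduce to \vasreach: to test $V \cap S = \emptyset$ I augment the underlying \vas so that finite control tracks all coordinates modulo $n$, and ask whether a configuration can be reached whose projected-away coordinates match the section vector and whose remaining residues lie in $R$; reachability to such a (semilinear) target set reduces to ordinary \vasreach. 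The test $U \subseteq S$ is the same test for the complementary residue set, which is again modular. If $U, V$ are separable, this procedure halts with a separator.

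For the negative procedure I would enumerate witnesses of nonseparability, each consisting of a pair of linear sets $L_U = \{b_U\} + \lin^{\geq 0}(p_1, \dots, p_k)$ and $L_V = \{b_V\} + \lin^{\geq 0}(q_1, \dots, q_m)$ together with effective certificates that $L_U \subseteq U$ and $L_V \subseteq V$ --- a run reaching each base, followed by independently pumpable cycles realising the periods --- such that $L_U$ and $L_V$ are themselves modular-nonseparable. Soundness is immediate by monotonicity: any modular separator of $U,V$ would separate the subsets $L_U \subseteq U$ and $L_V \subseteq V$. Both the certificate check and the modular-nonseparability check are decidable: modulo $n$ the set $L_U$ reduces to the coset $b_U + \langle p_1, \dots, p_k\rangle$ of the subgroup generated by its periods, so $L_U$ and $L_V$ are nonseparable precisely when $b_U - b_V \in \lin(p_1, \dots, p_k, q_1, \dots, q_m) + n\Z^d$ for every $n$. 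Since $\Z^d$ is residually finite, this intersection of cosets collapses to the single condition $b_U - b_V \in \lin(p_1, \dots, p_k, q_1, \dots, q_m)$, an integer-linear-algebra test (e.g.\ via Smith normal form).

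What remains, and what I expect to be the crux, is completeness of the negative procedure: the main technical observation that two nonseparable \vasreach sets (more generally, \secreachvas) always contain two nonseparable linear subsets. I would prove it through the profinite viewpoint: $U, V$ are nonseparable exactly when their closures in $\widehat{\Z}^d = \varprojlim (\Z/n\Z)^d$ meet, so by compactness there is a single $\xi \in \widehat{\Z}^d$ for which, for every $n$, there are points $u_n \in U$ and $v_n \in V$ congruent to $\xi$ modulo $n$. The hard part is to upgrade this infinite family of witnessing runs into a single pumpable pattern: applying a Dickson/well-quasi-ordering argument to the decompositions of the runs reaching the $u_n$ (and likewise the $v_n$), I would extract a subsequence sharing a common skeleton with independently iterable cycles, yielding linear subsets $L_U \subseteq U$ and $L_V \subseteq V$ whose period cosets still contain the residues of $\xi$ at every modulus, and which therefore remain nonseparable. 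Making this extraction respect the congruences to $\xi$ for all moduli simultaneously --- rather than for a single fixed modulus --- is the delicate point of the argument.
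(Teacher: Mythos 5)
Your overall architecture coincides with the paper's: two dovetailed semi-decision procedures, linear witnesses $L_U\subseteq U$, $L_V\subseteq V$ certified by runs (your ``run to the base plus independently pumpable cycles'' is exactly the paper's notion of $U$-witness, whose soundness rests on a run-amalgamation lemma for the embedding order $\runord$), and the group-theoretic collapse $\bigcap_{n}\bigl(\lin(P\cup Q)+n\Z^d\bigr)=\lin(P\cup Q)$ to decide nonseparability of two linear sets (the paper proves this via the structure theorem for finitely generated abelian groups; your appeal should be to residual finiteness of the quotient $\Z^d/\lin(P\cup Q)$ rather than of $\Z^d$ itself, but that is cosmetic). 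The genuine gap is precisely the point you flag and leave unresolved: completeness of the negative procedure. The first missing idea is the paper's factorial-moduli device. From nonseparability you get pairs $u_n\equiv_n v_n$, but a well-quasi-order extraction keeps only the pairs at some infinite set of indices $m_1<m_2<\cdots$, and $x\equiv_{m}y$ gives no information modulo $n\le m$ unless $n$ divides $m$; if, say, all surviving indices are powers of $2$, the extracted data certify nothing whatsoever about separation modulo $3$ (indeed $\lin(P\cup Q)+2^j\Z^d$ may equal $\Z^d$ for every $j$ while the base difference lies outside $\lin(P\cup Q)+3\Z^d$). The paper's fix: first thin the sequence so that $u_n\equiv_{n!}v_n$; then \emph{any} further infinite subsequence still witnesses, for every modulus $n$, some pair congruent modulo $n$, because its $j$-th pair is congruent modulo $m_j!$ and $n\mid m_j!$ whenever $n\le m_j$. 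Your profinite point $\xi$ does not substitute for this: the relation $u_{m_j}\equiv_{m_j}\xi$ is equally silent about all moduli outside the surviving set, so the same divisibility-cofinality issue recurs there.

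The second missing mechanism concerns your assertion that the extracted ``period cosets still contain the residues of $\xi$ at every modulus.'' What makes this true in the paper is a separate lemma: every (possibly infinite) subset of $\Z^d$ contains \emph{finitely many} of its own elements generating a subgroup that contains the whole set (proved via finite generation of subgroups of finitely generated abelian groups). Applying this to the set of differences $S_{\mathrm{inf}}=\{\delta_n-\gamma_n\mid n\in\N\}$, where $\delta_n=u_n-u_0$ and $\gamma_n=v_n-v_0$, yields indices $i_1,\dots,i_k$ such that, taking $P=\{\delta_{i_1},\dots,\delta_{i_k}\}$ and $Q=\{\gamma_{i_1},\dots,\gamma_{i_k}\}$, the group $\lin(P\cup Q)$ contains \emph{every} difference $\delta_n-\gamma_n$; crucially the periods are themselves increments realized by the wqo-related runs, so $L=\{u_0\}+\lin^{\geq 0}(P)$ and $M=\{v_0\}+\lin^{\geq 0}(Q)$ remain certified witnesses. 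Nonseparability of $L,M$ then follows per modulus: since integer and non-negative integer combinations coincide modulo $n$, one rewrites $\delta_n-\gamma_n\equiv_n\delta_n'-\gamma_n'$ with $\delta_n'\in\lin^{\geq 0}(P)$, $\gamma_n'\in\lin^{\geq 0}(Q)$, obtaining $u_0+\delta_n'\in L$ and $v_0+\gamma_n'\in M$ congruent modulo $n$. With these two ingredients (factorial thinning before extraction, and finite generation applied to the realized differences) your outline becomes the paper's proof; without them, the ``delicate point'' you name is exactly where the argument is still missing.
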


\begin{theorem}\label{thm:unary-vas-sep}
	The unary{\sepsep}separability problem for \vas{\sepsep}sections is decidable.
\end{theorem}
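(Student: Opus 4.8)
The plan is to decide unary{\sepsep}separability of two \vas{\sepsep}sections $U$ and $V$ by running, in parallel (dovetailing), two semi-decision procedures: a \emph{positive} one that searches for a unary separator, and a \emph{negative} one that searches for a finite certificate of nonseparability. Since these two situations are exhaustive and mutually exclusive, exactly one of the procedures halts, and decidability follows. Before starting, I would apply the standard simplifying reductions (taking all sections w.r.t.\ the $0$ vector, padding the inputs to a common dimension, and reordering so that the projected-away coordinates sit at the end), which are available via the constructions in the proof of Proposition~\ref{prop:union:intersection}.

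For the \emph{positive} procedure I would enumerate all unary sets $S \subseteq \N^d$. Such a set is specified by a threshold $n \in \N$ together with a choice of which of the finitely many $\cong_n$-classes to include, so the family is recursively enumerable. For a candidate $S$, separation means $U \subseteq S$ and $V \cap S = \emptyset$, equivalently $U \cap \overline S = \emptyset$ and $V \cap S = \emptyset$. Unary sets are closed under complement and are semilinear, so both $S$ and $\overline S$ are effectively semilinear; and emptiness of the intersection of a \vas{\sepsep}section with a semilinear set (fix the section coordinates, require the projected coordinates to satisfy the semilinear constraint) reduces to \vasreach, which is decidable. Hence the positive procedure halts exactly when a unary separator exists.

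For the \emph{negative} procedure the crux is a finite, verifiable witness of nonseparability. Here I would rely on the main technical lemma: if $U,V$ are not unary{\sepsep}separable, then there exist \emph{linear} sets $L_U \subseteq U$ and $L_V \subseteq V$ that are already not unary{\sepsep}separable. Soundness is immediate, since any separator of $U,V$ would also separate $L_U \subseteq U$ from $L_V \subseteq V$. To make such witnesses enumerable, I would generate linear subsets of a section from \emph{pumpable runs} (linear path schemes): a base run together with designated cycles that can be iterated independently while keeping all configurations non-negative, so that pumping produces a set $\{b\} + \lin^{\geq 0}(p_1, \ldots, p_k) \subseteq U$, with $b$ the target of the base run and the $p_i$ the cycle effects. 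These schemes are finite objects, effectively enumerable, and their validity is checkable. Finally, unary{\sepsep}separability of two \emph{given} linear sets is decidable: $L_U,L_V$ are nonseparable iff for every $n$ some $\cong_n$-class meets both, a condition that for linear sets reduces to an explicit comparison of their residues modulo $n$ and of their sets of unbounded coordinates, with the binding case being large $n$, where it stabilises into a checkable arithmetic compatibility. The negative procedure thus enumerates pairs of pumpable runs, computes the associated linear sets, and halts on the first pair that is unary{\sepsep}nonseparable.

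The main obstacle is the \emph{completeness} of the descent lemma underlying the negative procedure. Its soundness is trivial; the difficulty is to show that whenever \emph{no} unary set separates $U$ from $V$, this failure is already exhibited by two pumpable-run-definable linear subsets. Starting from the hypothesis that every modulus/threshold $n$ admits $x_n \in U$ and $y_n \in V$ with $x_n \cong_n y_n$, one obtains an infinite family of colliding run pairs, and must extract from it a single pair of linear subsets that collide modulo $n$ \emph{for all $n$ simultaneously}. I expect this extraction—combining a compactness argument over all $n$ with K\"onig/Ramsey-type pumping on the structure of \vas runs (well-quasi-ordering the infinite family of colliding pairs and stabilising the colliding witnesses into two linear path schemes)—to be the heart of the whole argument. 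The modular case (Theorem~\ref{thm:modular-vas-sep}) then follows by the same scheme, replacing $\cong_n$ by $\equiv_n$ and dropping the unboundedness bookkeeping.
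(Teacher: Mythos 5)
Your overall architecture---dovetailing a positive procedure that enumerates unary separators (with verification reduced to \vas reachability) against a negative procedure that enumerates pairs of linear subsets produced by pumpable runs---is exactly the paper's architecture, and your positive procedure is sound (the paper's version, Lemma~\ref{lem:positive-witness}, enumerates thresholds $n$ and checks collisions $u \cong_n v$ directly, but the two are equivalent via Proposition~\ref{prop:inseparable-pairs}). The genuine gap is precisely where you write that you ``expect'' the extraction to work: the completeness of the descent lemma (if $U,V$ are not unary{\sepsep}separable, then two run-definable linear subsets are already not unary{\sepsep}separable) \emph{is} the content of the theorem, and you do not prove it. The paper's proof of this step (Lemma~\ref{lem:unary:nonsep:witness}) requires specific ingredients your sketch does not supply: (i) the wqo $\runord$ on runs (Proposition~\ref{prop:run-wqo}, via Dickson and Higman) together with Lemma~\ref{lem:additive} and Corollary~\ref{cor:summing-runs}, which is what makes the extracted increments freely summable, so that $\set{u_0} + \lin^{\geq 0}(P)$ is genuinely contained in the reachability set; (ii) Lemma~\ref{lem:finite-base}---every subset of $\Z^d$ lies in the $\lin$-span of finitely many of its own elements, by finite generation of subgroups of finitely generated abelian groups---which, combined with Lemma~\ref{lem:lin:comb:mod}, is how the infinite family of colliding pairs is compressed into finitely many periods while preserving a collision modulo every $n$ simultaneously; and (iii), specific to the unary case, the bookkeeping of the set $F$ of coordinates on which both sequences are unbounded, arranged (after discarding a prefix) so that the two witnesses are $F$-linked.

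Ingredient (iii) also repairs the second weak spot of your proposal: you assert that unary separability of two given linear sets is decidable via a ``stabilising arithmetic compatibility,'' but you give no criterion and no proof. The paper never needs such a criterion: its negative procedure checks \emph{modular} separability of the witness pair, which has a clean decidable characterization ($\set{b} + \lin^{\geq 0}(P)$ and $\set{c} + \lin^{\geq 0}(Q)$ are modular{\sepsep}nonseparable iff $b - c \in \lin(P \cup Q)$, Lemma~\ref{lem:modular:separability:linear}, an instance of linear Diophantine solvability), and then invokes Lemma~\ref{lem:F-linked}: for \emph{linked} linear sets, modular and unary separability coincide. Since the constructed witnesses are linked, modular{\sepsep}nonseparability of a witness pair is a sound and complete certificate of unary{\sepsep}nonseparability. (Decidability of unary separability of semilinear sets does exist in the literature, due to Choffrut and Grigorieff, so your route could be completed by citation, but as written it is an unproven claim.) Finally, the dependence between the two theorems runs opposite to your last sentence: the unary case is not the modular case with ``bookkeeping dropped''; rather, it is obtained by layering the linkedness argument (Lemma~\ref{lem:F-linked}) on top of the modular construction.
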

The proofs are postponed to Sections~\ref{sec:linear}--\ref{sec:unary-vas-sep}.
Furthermore, as a corollary of Theorem~\ref{thm:unary-vas-sep} we derive decidability of two commutative variants of 
the regular{\sepsep}separability of \vaslangs 
(formulated in~Theorems~\ref{cor:comreg-sep-vas} and~\ref{cor:reg-sep-pi-vas} below).

To consider languages instead of reachability sets, we need to assume that transitions of a \vas are \emph{labeled} 
by elements of an alphabet $\Sigma$, and thus every run is labeled by a word over $\Sigma$ obtained by concatenating labels of consecutive transitions of a run. We allow for silent transitions labeled by $\varepsilon$, i.e., transitions that do not contribute to the labeling of a run.
The language $L(V)$ of a \vas $V$ contains labels of those runs of $V$ that end in an \emph{accepting} configuration.
Our results work for several variants of acceptance; for instance, for a given fixed configuration $v_0$, 
\begin{itemize}
\item we may consider a configuration $v$ accepting if $v \geq v_0$ (this choice yields so called \emph{coverability languages}); or 
\item we may consider a configuration $v$ accepting if $v = v_0$ (this choice yields \emph{reachability languages}).
\end{itemize}

The Parikh image of a word $w\in \Sigma^*$, for a fixed total ordering $a_1 < \ldots < a_d$ of $\Sigma$,
is a vector in $\N^d$ whose $i$th coordinate 
stores the number of occurrences of $a_i$ in $w$. We lift the operation element-wise to languages,
thus the Parikh image of a language $L$, denoted $\parimage(L)$, is a subset of $\N^d$.
Two words $w, v$ over $\Sigma$ are \emph{commutative equivalent} if their Parikh images are equal.
The \emph{commutative closure} of a language $L \subseteq \Sigma^*$, denoted $\cc(L)$, is the language containing all words $w \in \Sigma^*$ commutative equivalent to some word $v\in L$.
A language $L$ is \emph{commutative} if it is invariant under commutative equivalence, i.e., $L = \cc(L)$. 
Unary sets of vectors are exactly the Parikh images of commutative regular languages; reciprocally, 
commutative regular languages are exactly the inverse Parikh images of unary sets.
Note that a commutative language is uniquely determined by its Parikh image.

As a corollary of Theorem~\ref{thm:unary-vas-sep} we deduce decidability of the following two commutative variants of 
the regular{\sepsep}separability of \vaslangs:
\begin{itemize}
\item
\emph{commutative regular{\sepsep}separability of \vaslangs}: given two \vases $V, V'$,
decide whether there is a commutative regular language $R$ that includes $L(V)$ and is disjoint from $L(V')$;
\item
\emph{regular{\sepsep}separability for commutative closures of \vaslangs}: given two \vases $V, V'$,
decide whether there is a regular language $R$ that includes $\cc(L(V))$ and is disjoint from $\cc(L(V'))$.
\end{itemize}

\begin{theorem}\label{cor:comreg-sep-vas}
Commutative regular{\sepsep}separability is decidable for \vaslangs.
\end{theorem}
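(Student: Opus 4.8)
The plan is to reduce commutative regular separability of \vaslangs to the already-established unary separability problem for \vas{\sepsep}sections (Theorem~\ref{thm:unary-vas-sep}), by passing through Parikh images. The key observation is the stated correspondence: unary sets are exactly the Parikh images of commutative regular languages, and conversely. So first I would argue that, given two \vases $V, V'$ (with labeled transitions over an alphabet $\Sigma$ of size $d$), there exists a commutative regular separator $R$ with $L(V) \subseteq R$ and $R \cap L(V') = \emptyset$ if and only if there exists a unary separator of the Parikh images $\parimage(L(V))$ and $\parimage(L(V'))$ in $\N^d$. One direction is immediate: if $R$ is a commutative regular separator, then $\parimage(R)$ is a unary set, it contains $\parimage(L(V))$, and it is disjoint from $\parimage(L(V'))$ since $R$ is commutative and disjoint from $L(V')$. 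For the converse, if $U \subseteq \N^d$ is a unary separator of the two Parikh images, then its inverse Parikh image $\cc^{-1}$-style preimage is a commutative regular language, and because each $L(V)$ is already determined on the level of commutative equivalence classes by its Parikh image, this preimage separates $L(V)$ from $L(V')$.

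Next I would reduce the existence of a unary separator of $\parimage(L(V))$ and $\parimage(L(V'))$ to an instance of unary separability for \vas{\sepsep}sections, which is exactly what Theorem~\ref{thm:unary-vas-sep} handles. The bridge here is that the Parikh image of the language of a labeled \vas is itself (effectively) a \vas{\sepsep}section: one augments the original $d$ working coordinates with $|\Sigma|$ extra \emph{counter} coordinates, one per letter, and modifies each transition labeled $a_i$ so that it additionally increments the $i$th counter coordinate (silent $\varepsilon$-transitions leave all counters untouched). The reachability set of this enlarged \vas, sectioned so as to fix the original working coordinates to the accepting configuration and project onto the $|\Sigma|$ counter coordinates, is precisely $\parimage(L(V))$. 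This construction works uniformly for both coverability and reachability acceptance, using a section that either fixes the coordinates to $v_0$ (reachability) or allows values $\geq v_0$ — the latter handled by a short preprocessing \vas gadget that drains the surplus down to $v_0$ before sectioning. Thus $\parimage(L(V))$ and $\parimage(L(V'))$ are both \vas{\sepsep}sections, and deciding their unary separability is a direct appeal to Theorem~\ref{thm:unary-vas-sep}.

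Assembling these pieces gives the algorithm: from $V, V'$ construct the two augmented \vases, extract the sections realizing $\parimage(L(V))$ and $\parimage(L(V'))$, and run the decision procedure of Theorem~\ref{thm:unary-vas-sep}; by the equivalence established in the first paragraph, its answer is exactly the answer to commutative regular separability. I expect the main obstacle to be making the Parikh-image-as-section construction fully rigorous across the different acceptance conditions, and in particular verifying carefully that the unary-set/commutative-regular-language correspondence is \emph{separation-preserving} in both directions — that is, that a unary separator of Parikh images really does pull back to a commutative regular separator of the languages themselves, which relies on the fact (noted in the excerpt) that a commutative language is uniquely determined by its Parikh image, together with the effectiveness of the inverse Parikh image of a unary set. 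The rest is bookkeeping about coordinate padding and sectioning of the kind already used in Proposition~\ref{prop:union:intersection}.
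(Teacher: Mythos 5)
Your proposal is correct and follows essentially the same route as the paper: realizing $\parimage(L(V))$ and $\parimage(L(V'))$ as \vas{\sepsep}sections via letter-counting coordinates, and using the correspondence between commutative regular languages and unary sets to translate commutative regular separability of the languages into unary separability of their Parikh images, which is then decided by Theorem~\ref{thm:unary-vas-sep}. The paper states this argument only in sketch form, and your write-up fills in exactly the details it leaves implicit (the section construction, the handling of acceptance conditions, and the separation-preserving nature of the Parikh correspondence).
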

Indeed, given two \vases $V, W$ one easy constructs another two \vases $V', W'$
s.t. $\parimage(L(V))$ is a section of $\reach(V')$, and similarly for $W'$.
By the tight correspondence between commutative regular languages and unary sets,
we observe that $L(V)$ and $L(W)$ are separated by a commutative regular language if, and only if,
their Parikh images $\parimage(L(V))$ and $\parimage(L(W))$ are separated by a unary set,
which is is decidable by Theorem~\ref{thm:unary-vas-sep}.
%
\begin{theorem}\label{cor:reg-sep-pi-vas}
Regular{\sepsep}separability is decidable for commutative closures of \vaslangs.
\end{theorem}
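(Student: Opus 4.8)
The plan is to reduce Theorem~\ref{cor:reg-sep-pi-vas} to Theorem~\ref{cor:comreg-sep-vas} by exploiting the fact that for commutative languages, separation by regular sets and separation by commutative regular sets coincide. First I would observe that the commutative closures $\cc(L(V))$ and $\cc(L(V'))$ are themselves commutative languages, and that a commutative language is uniquely determined by its Parikh image (as noted just before the statement). Since unary sets are exactly Parikh images of commutative regular languages, separating $\cc(L(V))$ from $\cc(L(V'))$ by a \emph{commutative} regular language is by the remark preceding Theorem~\ref{cor:comreg-sep-vas} equivalent to separating the unary sets $\parimage(L(V))$ and $\parimage(L(V'))$, which is decidable via Theorem~\ref{thm:unary-vas-sep}. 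So the entire content of the reduction lies in upgrading from arbitrary regular separators to commutative regular ones.

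The key step is therefore the following closure-under-commutation argument. Suppose $R$ is \emph{any} regular language separating the two commutative closures, i.e. $\cc(L(V)) \subseteq R$ and $\cc(L(V')) \cap R = \emptyset$. I would then show that the commutative closure $\cc(R)$ is \emph{also} a separator, and moreover that it is still regular. Regularity of $\cc(R)$ for regular $R$ is a classical fact: the Parikh image of a regular language is semilinear, and the commutative closure of a regular language over a fixed alphabet is again regular (indeed it is the inverse Parikh image of the semilinear set $\parimage(R)$, which is commutative regular). For the separation property, since $\cc(L(V))$ is already commutative and contained in $R$, applying $\cc$ monotonically gives $\cc(L(V)) = \cc(\cc(L(V))) \subseteq \cc(R)$. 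For disjointness, if some word $w \in \cc(R) \cap \cc(L(V'))$, then $w$ is commutative-equivalent to some $w' \in R$ and to some $w'' \in L(V')$; but then $w' \in \cc(L(V'))$ by commutative equivalence, contradicting $R \cap \cc(L(V')) = \emptyset$. Hence $\cc(R)$ separates the two closures and is a commutative regular language.

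Putting the pieces together, there exists a regular separator iff there exists a commutative regular separator: one direction is immediate since commutative regular languages are regular, and the other is exactly the closure argument above. Thus regular{\sepsep}separability of the commutative closures reduces to commutative regular{\sepsep}separability of the original \vaslangs, which is Theorem~\ref{cor:comreg-sep-vas}, and the latter is decidable. This chain yields decidability of regular{\sepsep}separability for commutative closures of \vaslangs.

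I expect the main obstacle to be purely a matter of care rather than depth: one must verify that the equivalences are genuinely iff's and that no subtlety is introduced by silent $\varepsilon$-transitions or by the choice of acceptance condition (coverability versus reachability). In particular, the only nontrivial ingredient is the effectivity and regularity of $\cc(R)$, but this is standard, so I anticipate the proof to be short, essentially a two-line reduction leaning entirely on the preceding remark and on Theorem~\ref{thm:unary-vas-sep}.
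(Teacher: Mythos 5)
There is a genuine gap, and it sits exactly at the step you yourself flag as ``the only nontrivial ingredient'': your claim that the commutative closure of a regular language is again regular is false. The standard counterexample is $R = (ab)^*$: its commutative closure is $\setof{w \in \{a,b\}^*}{|w|_a = |w|_b}$, which is not regular. The confusion comes from conflating semilinear and unary sets: $\parimage(R)$ is always semilinear (Parikh's theorem), but only inverse Parikh images of \emph{unary} sets are guaranteed to be (commutative) regular, and $\parimage((ab)^*) = \setof{(n,n)}{n \in \N}$ is semilinear but not unary. Consequently, although your $\cc(R)$ is indeed a commutative separator (your monotonicity and disjointness arguments are correct), it need not be regular, so your argument does not establish the equivalence ``regular separator exists iff commutative regular separator exists.''

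That equivalence is nevertheless true, but proving it is precisely the crucial content of the paper's Lemma~\ref{lem:com-sep}, and it requires a different idea. Given an arbitrary regular separator $S$ of two commutative languages $K$ and $L$, one takes the idempotent power $\omega$ of the syntactic monoid of $S$, so that $u v^\omega w \in S \iff u v^{2\omega} w \in S$ in every context. Writing each vector $u \in \N^d$ as a canonical word $w_u = a_1^{u_1}\cdots a_d^{u_d}$, repeated application of this pumping property shows that $u \cong_\omega v$ implies $w_u \in S \iff w_v \in S$; since $K, L$ are commutative and separated by $S$, it follows that no element of $\parimage(K)$ is $\cong_\omega$-equivalent to an element of $\parimage(L)$. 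The union of the $\cong_\omega$-classes meeting $\parimage(K)$ is then a unary separator of the Parikh images, and its inverse Parikh image is the desired commutative regular separator. In other words, the separator one constructs is not $\cc(S)$ but a coarser, finite-index object extracted from the algebra of $S$. Your surrounding reductions (through Theorem~\ref{cor:comreg-sep-vas} down to Theorem~\ref{thm:unary-vas-sep}) are sound, but without this syntactic-monoid argument the key upgrading step fails.
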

Similarly as above, we reduce to unary separability of \vas reachability sets (which is decidable once again by Theorem~\ref{thm:unary-vas-sep}),
which is immediate once one proves the following crucial observation.
\begin{lemma} \label{lem:com-sep}
	Two commutative languages $L, K \subseteq \Sigma^*$ are regular{\sepsep}separable if, and only if, their Parikh images are 
	unary{\sepsep}separable.
\end{lemma}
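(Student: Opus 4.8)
The plan is to prove the two implications separately, with the forward direction (unary to regular separability) being routine and the converse being the crux. Throughout I write $A := \parimage(L)$ and $B := \parimage(K)$, and I use that, since $L,K$ are commutative, $L = \parimage^{-1}(A)$ and $K = \parimage^{-1}(B)$.

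For the easy direction, suppose $A$ and $B$ are separated by a unary set $S$, so $A \subseteq S$ and $B \cap S = \emptyset$. I would take $R := \parimage^{-1}(S)$, which is a commutative regular language by the stated correspondence between unary sets and commutative regular languages. Then $L \subseteq R$, because every $w \in L$ has $\parimage(w) \in A \subseteq S$; and $R \cap K = \emptyset$, because any $w \in R \cap K$ would witness $\parimage(w) \in S \cap B$. Hence $R$ is a regular separator. Note this direction does not even use commutativity of $L,K$.

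For the hard direction I would argue the contrapositive: if $A,B$ are not unary{\sepsep}separable, then $L,K$ are not regular{\sepsep}separable. First I record the combinatorial meaning of non-separability. The least unary set of modulus/threshold $n$ containing $A$ is exactly its $\cong_n$-closure $[A]_n = \{x \mid \exists a \in A,\ x \cong_n a\}$, so $A,B$ are unary{\sepsep}separable iff some $n$ makes $[A]_n \cap B = \emptyset$. Negating, non-separability means that for every $n \in \N$ there exist $a \in A$ and $b \in B$ with $a \cong_n b$. Now fix an arbitrary regular $R \supseteq L$, recognized by a DFA with $N$ states, and choose $n$ to be a multiple of $\mathrm{lcm}(1,\ldots,N)$ with $n \geq N$; this yields witnesses $a \cong_n b$. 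Exploiting commutativity, I represent $a$ and $b$ by the block words $w_a = a_1^{a[1]} \cdots a_d^{a[d]}$ and $w_b = a_1^{b[1]} \cdots a_d^{b[d]}$, which lie in $L$ and $K$ respectively (as $L,K$ are closed under permutation).

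The core is then a pumping observation about the DFA: reading $a_i^{m}$ and $a_i^{m'}$ from any common state reaches the same state whenever $m \equiv m' \bmod n$ and $m \geq n \iff m' \geq n$, because the sequence of states visited under repeated $a_i$ is eventually periodic with pre-period below $N$ and period dividing $\mathrm{lcm}(1,\ldots,N)$, hence dividing $n$. Applying this blockwise, and using that in the small coordinates $\cong_n$ forces $a[i]=b[i]$, the words $w_a$ and $w_b$ drive the DFA to the same final state; since $w_a \in L \subseteq R$ is accepted, so is $w_b$, giving $w_b \in R \cap K$ and contradicting separation. The main obstacle I anticipate is exactly this automaton argument: verifying that the chosen $n$ simultaneously forces every per-letter block to have passed its pre-period (so that the modular pumping is valid across all coordinates at once) and confirming the blockwise induction on final states. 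Commutativity is essential here, as it is what lets me rearrange each witness into contiguous letter blocks on which the pumping acts.
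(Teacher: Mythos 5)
Your proof is correct and takes essentially the same route as the paper: the easy direction is the identical inverse-Parikh-image construction, and your hard direction is just the contrapositive of the paper's argument, with the lcm of the DFA's cycle lengths playing exactly the role of the idempotent power $\omega$ of the syntactic monoid of the separator --- both extract a modulus $n$ from the regular language for which block words $a_1^{u_1}\cdots a_d^{u_d}$ with $\cong_n$-equivalent exponent vectors are indistinguishable. The only cosmetic difference is that the paper directly builds the unary separator as the $\cong_\omega$-closure of $\parimage(K)$, whereas you refute every candidate regular separator $R \supseteq L$; the underlying block-word pumping insight is the same.
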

\begin{proof}
	We start with the ``if'' direction.
	Let $\parimage(K)$ and $\parimage(L)$ be separable by some unary set $U \subseteq \N^d$.
	Let $S = \setof { w\in \Sigma^* }{\parimage(w) \in U}$. 
	It is easy to see that $S$ is (commutative) regular since $U$ is unary, and that $S$ separates $K$ and $L$.  

	Now we turn to the ``only if'' direction.
	Let $K$ and $L$ be separable by a regular language $S$, say $K \subseteq S$ and $S \cap L = \emptyset$.
	Let $M$ be the syntactic monoid of $S$ and $\omega$ be its idempotent power,
	i.e., a number such that for every $m \in M$ it holds $m^\omega = m^{2 \omega}$.
	In particular, for every word $u \in \Sigma^*$ we have
	\begin{equation}\label{eq:omega-power}
		u v^\omega w \in L \iff u v^{2\omega} w \in S\enspace;
	\end{equation}
	in other words, one can substitute $v^\omega$ by $v^{2\omega}$ and vice versa in every context.
	Let $\Sigma = \{a_1, \ldots, a_d\}$.
	For $u = (u_1, \ldots, u_d) \in \N^d$ define a word $w_u = a_1^{u_1} \cdots a_d^{u_d}$.
	For every $u, v \in \N^d$ such that $u \cong_\omega v$,
	by repetitive application of~\eqref{eq:omega-power} we get
	$w_u \in S$ iff $w_v \in S$. 
	As $K$ is commutative and $K \subseteq S$, we have $w_u \in S$ for all $u \in \parimage(K)$;
	similarly, we have $w_v \not\in S$ for all $v \in \parimage(L)$.
	Therefore for all $u \in \parimage(K)$, $v \in \parimage(L)$ we have $u \not\cong_\omega v$.
	Let $U = \setof{x \in \N^d}{\prettyexists{y \in \parimage(K)}{x \cong_\omega y}}$. 
	The set $U$ separates $\parimage(K)$ and $\parimage(L)$ and,
	being a union of $\cong_\omega$ equivalence classes, it is unary.
	\qed
\end{proof}


\section{Modular\sepsep and unary{\sepsep}separability of linear sets}
\label{sec:linear}

The rest of the paper is devoted to the proofs of Theorems~\ref{thm:modular-vas-sep} and~\ref{thm:unary-vas-sep}.
In this section we prove that modular{\sepsep}separability of linear sets is decidable%
\footnote{While decidability follows from~\cite{DBLP:journals/ipl/ChoffrutG06} and is thus not a new result,
we provide here another simple proof to make the paper self-contained.},
and provide a condition on linear sets that makes modular{\sepsep}separability equivalent to unary{\sepsep}separability.
The two results, stated in Lemmas~\ref{cor:modular:separability:linear} and~\ref{lem:F-linked} below, 
respectively, are used in Sections~\ref{sec:modular-vas-sep} and~\ref{sec:unary-vas-sep}, 
where the proofs of Theorems~\ref{thm:modular-vas-sep} and~\ref{thm:unary-vas-sep} are completed.

\paragraph{Linear combinations modulo $n$.}
We start with some preliminary results from linear algebra.
For $n \in \N$,
let $\lin_n^{\geq 0}(v_1, \ldots, v_k)$ be the closure of $\lin^{\geq 0}(v_1, \ldots, v_k)$ modulo $n$, i.e.,
\begin{align*}
	\lin_n^{\geq 0}(v_1, \ldots, v_k) = \setof { v \in \N^d} {\prettyexists{u \in \lin^{\geq 0}(v_1, \ldots, v_k)}{v \equiv_n u}}\enspace.
\end{align*}
Similarly one defines $\lin_n(v_1, \ldots, v_k)$ be the closure of $\lin(v_1, \ldots, v_k)$ modulo $n$.
Observe however that $\lin_n(v_1, \ldots, v_k) = \lin_n^{\geq 0}(v_1, \ldots, v_k)$.
Indeed, if $v \equiv_n l_1 v_1 + \ldots + l_k v_k$ for $l_1, \ldots, l_k \in \Z$ then 
$v \equiv_n (l_1 + N n) v_1 + \ldots + (l_k + N n) v_k$ for any $N\in\N$.

\begin{lemma}
	\label{lem:lin:comb:mod}
	\label{lem:solution-modulo}
	$\lin (v_1, \dots, v_k) \ = \ \bigcap_{n>0} \lin_n^{\geq 0} (v_1, \dots, v_k)$.
\end{lemma}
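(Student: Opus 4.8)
The plan is to prove the two inclusions separately, writing $\Lambda := \lin(v_1, \ldots, v_k)$ for the integer lattice generated by the periods, and to keep in mind the identity $\lin_n(v_1,\ldots,v_k) = \lin_n^{\geq 0}(v_1,\ldots,v_k)$ recorded just before the lemma, which lets me ignore the sign of coefficients throughout. The easy inclusion is $\Lambda \subseteq \bigcap_{n>0}\lin_n^{\geq 0}(v_1,\ldots,v_k)$ (read inside $\N^d$): if $v \in \Lambda$ then trivially $v \equiv_n v$ with $v \in \lin(v_1,\ldots,v_k)$, so $v$ lies in $\lin_n^{\geq 0}(v_1,\ldots,v_k)$ for every $n$.

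For the reverse inclusion I would first reformulate membership. Observe that $v \in \lin_n^{\geq 0}(v_1,\ldots,v_k)$ means $v - u \in n\Z^d$ for some $u \in \Lambda$, that is, $v \in \Lambda + n\Z^d$; hence $\lin_n^{\geq 0}(v_1,\ldots,v_k) = (\Lambda + n\Z^d) \cap \N^d$. Taking the intersection over $n$ turns the whole claim into the purely lattice-theoretic fact
\[
\bigcap_{n>0}\,(\Lambda + n\Z^d) \ = \ \Lambda \ ,
\]
after which intersecting with $\N^d$ recovers the stated equation (this is also the point at which I would note the small convention that both sides of the lemma are understood inside $\N^d$, equivalently that the left-hand side is $\lin(v_1,\ldots,v_k) \cap \N^d$). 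The nontrivial direction of the displayed identity is $\subseteq$: an element that lies in $\Lambda$ modulo every $n$ must already lie in $\Lambda$.

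To establish this I would invoke the stacked-basis (Smith normal form) theorem: there is a basis $e_1, \ldots, e_d$ of $\Z^d$ and positive integers $d_1, \ldots, d_r$, with $r = \operatorname{rank}\Lambda$, such that $d_1 e_1, \ldots, d_r e_r$ is a basis of $\Lambda$. Writing $v = \sum_{i=1}^d c_i e_i$, a coordinatewise check (using $d_i\Z + n\Z = \gcd(d_i,n)\Z$ in directions $i \leq r$ and $0 + n\Z = n\Z$ in directions $i > r$) shows that $v \in \Lambda + n\Z^d$ holds exactly when $\gcd(d_i,n) \mid c_i$ for $i \leq r$ and $n \mid c_i$ for $i > r$. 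Requiring this for all $n$ forces $c_i = 0$ for $i > r$ (take $n > |c_i|$) and, choosing $n = d_i$, forces $d_i \mid c_i$ for $i \leq r$; together these say precisely that $v \in \Lambda$.

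I expect the main obstacle to be exactly this hard direction, and within it the fact that the quotient $\Z^d/\Lambda$ need not be torsion-free: a naive argument that only eliminates the free part (the coordinates $i > r$) would overlook the congruence constraints carried by the invariant factors $d_i$. The stacked-basis decomposition is what cleanly separates these two phenomena and reduces the whole statement to the two elementary one-variable observations that only $0$ is divisible by every $n$ in $\Z$, and that $d \mid c$ whenever $\gcd(d,n) \mid c$ for all $n$.
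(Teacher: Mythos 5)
Your proof is correct, and its skeleton---the trivial inclusion plus the reduction of the hard one to the lattice identity $\bigcap_{n>0}(\Lambda+n\Z^d)=\Lambda$, where $\Lambda=\lin(v_1,\dots,v_k)$---matches the paper's: the paper's subgroup $S_n$, generated by the $v_i$ together with $n$ times the unit vectors, is exactly $\Lambda+n\Z^d$. Where you genuinely diverge is in proving that identity. The paper passes to the quotient $G=\Z^d/\Lambda$ and invokes the structure theorem for finitely generated abelian groups, $G\cong G_1\times\Z^k$ with $G_1$ finite of order $l$: an element lying in the image of every $n\Z^d$ has trivial $G_1$-component because element orders divide $l$ (take $n=l$), and trivial free component because only $0$ in $\Z^k$ is divisible by every $n$. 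You instead stay inside $\Z^d$ and apply the stacked-basis (Smith normal form) theorem to the pair $(\Z^d,\Lambda)$, converting membership in $\Lambda+n\Z^d$ into the coordinate conditions $\gcd(d_i,n)\mid c_i$ for $i\le r$ and $n\mid c_i$ for $i>r$, which the choices $n=d_i$ and $n>|c_i|$ then eliminate. The two arguments are close cousins: your invariant factors $d_i$ play exactly the role of the paper's torsion part $G_1$, your coordinates $i>r$ that of its free part $\Z^k$, and you correctly identify torsion as what a naive argument would miss. What your version buys is concreteness---an exact description of $\lin_n^{\geq 0}$ for each $n$, and for each $v\notin\Lambda$ an explicit single witnessing modulus (any common multiple of the $d_i$ exceeding all $|c_i|$ with $i>r$), which could yield quantitative information; the paper's version buys brevity and coordinate-freeness at the cost of quoting the structure theorem for the quotient. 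Your side remark on conventions is also warranted: as defined, the right-hand side is a subset of $\N^d$ while $\lin(v_1,\dots,v_k)$ is not, and the paper is indeed loose here (it later applies the lemma to differences $b-c$ that need not be non-negative); either reading is justified by the same lattice identity.
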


\begin{proof}
	The left-to-right inclusion is immediate: 
	for any $n \in \N$ we have
	\[ \lin (v_1, \dots, v_k) \ \subseteq \ \lin_n (v_1, \dots, v_k)  \ = \  \lin_n^{\geq 0} (v_1, \dots, v_k)\enspace. \]
	

	For the right-to-left inclusion 
	we take an algebraic perspective,
	and treat $S := \lin(v_1, \ldots, v_k)$ as a subgroup of $\Z^d$ generated by $F = \{v_1, \ldots, v_k\}$. 
	Let $I$ be the set of all $d$ unit vectors in $\Z^d$.
	For every $n\in \Nplus$, let $n \Z^d$ denote the subgroup of $\Z^d$ generated by $n I$, and let $S_n$ be the subgroup of $\Z^d$ generated by $F \cup (n I)$. 
	In algebraic terms, our obligation is to show that 
	\begin{align} \label{eq:toprove}
	\bigcap_{n\in \Nplus} S_n \ \subseteq \ S\enspace.
	\end{align}
	Let $G := \Z^d/S$ be the quotient group and consider the quotient group homomorphism $h : \Z^d \to G$.
	It is legal, as every subgroup of an abelian group is normal, thus we can consider a quotient with respect to it.
	We have thus $\ker(h) = \setof{x\in \Z^d}{h(x) = \zeroel{G}} = S$, where $\zeroel{G}$ is the zero element of $G$.
	Now~\eqref{eq:toprove} is equivalent to
	\begin{align*}
	h\big(\bigcap_{n\in \Nplus} S_n \big) \ = \ \{\zeroel{G}\}\enspace,
	\end{align*}
	which will immediately follow, once we manage to show
	\begin{align*}
	\bigcap_{n\in \Nplus} h(S_n) \ = \ \{\zeroel{G}\}\enspace.
	\end{align*}
	Observe that $h(S_n) = h(n \Z^d)$, for every $n\in \Nplus$, and hence we may equally well demonstrate:
	\begin{align} \label{eq:toprovereally}
	\bigcap_{n\in \Nplus} h(n \Z^d) \ = \ \{\zeroel{G}\}\enspace.
	\end{align}
	The group $G$, being a finitely generated abelian group, is isomorphic to the direct
	product of a finite group $G_1$ (let $l$ be its order, i.e., the number of its elements)
	and $G_2 = \Z^k$, for some $k\in \N$
	(see for instance Theorem 2.2, p.~76, in~\cite{Hungerford74}).
	For showing~\eqref{eq:toprovereally}, consider an element $g\in G$ which belongs to $h(n \Z^d)$ for all $n\in \Nplus$,
	and its two projections	$g_1$ and $g_2$ in $G_1$ and $G_2$, respectively.
	As $g \in h(l \Z^d)$, then necessarily $g_1 = l\cdot g'$ for some $g' \in G_1$,
	and since the order of every element divides the order of the group $l$,
	we have $g_1 = \zeroel{G_1}$.
	Similarly, we deduce that $g_2 = \zeroel{G_2}$; indeed, this is implied by the fact that for every $n\in \Nplus$, $g_2 = n g'$ for some $g' \in G_2$.
	Thus $g = \zeroel{G}$ as required.
	\qed
\end{proof}

\paragraph{Modular{\sepsep}separability.}

In the rest of the paper, we heavily rely on the following straightforward characterization of modular{\sepsep}separability:
\begin{proposition}\label{prop:modular-inseparable-pairs}
	Two sets $U, V \subseteq \N^d$ are modular{\sepsep}separable if, and only if, there exists $n \in \N$ such that
	for all $u \in U$, $v \in V$ we have $u \not\equiv_n v$.
\end{proposition}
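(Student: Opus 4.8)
The plan is to prove both implications directly from the definition of a modular set as a union of $n$-modular equivalence classes, with no extra machinery. The key observation is that a single modulus $n$ simultaneously witnesses separability on one side and the arithmetic condition on the other, so the two statements are really two readings of the same closure property: a modular set of modulus $n$ is precisely a set closed under $\equiv_n$.

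For the ``if'' direction, suppose such an $n$ exists. I would take as separator the saturation of $U$ under $\equiv_n$, namely $S := \setof{x \in \N^d}{\prettyexists{u \in U}{x \equiv_n u}}$. By construction $S$ is a union of $n$-modular equivalence classes, hence modular, and $U \subseteq S$ is immediate. Disjointness $V \cap S = \emptyset$ follows directly from the hypothesis: any $v \in V \cap S$ would be $\equiv_n$-equivalent to some $u \in U$, contradicting $u \not\equiv_n v$.

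For the ``only if'' direction, let $S$ be a modular separator of $U$ and $V$, and let $n$ be a modulus such that $S$ is a union of $n$-modular classes, so that $S$ is closed under $\equiv_n$. I would argue by contradiction: if some $u \in U$ and $v \in V$ satisfied $u \equiv_n v$, then from $u \in U \subseteq S$ together with closure of $S$ under $\equiv_n$ we would obtain $v \in S$, contradicting $V \cap S = \emptyset$. Hence the arithmetic condition holds with this very $n$.

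I do not expect any genuine obstacle here; the statement is essentially an unfolding of definitions. The only point worth stating carefully is that the modulus appearing in the arithmetic condition and the modulus defining the separating modular set may be taken to be one and the same $n$, which is exactly what makes the equivalence immediate in both directions.
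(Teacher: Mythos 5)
Your proof is correct and follows essentially the same route as the paper's: the separator in the ``if'' direction is exactly the $\equiv_n$-saturation $\setof{s \in \N^d}{\prettyexists{u \in U}{s \equiv_n u}}$ used there, and the ``only if'' direction is the same closure-under-$\equiv_n$ argument, which you merely spell out in more detail than the paper does.
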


\begin{proof}
	If $U, V$ are separable by some $n$-modular set,
	then for all $u \in U, v \in V$ we have $u \not\equiv_n v$.
	On the other hand, if for all $u \in U, v \in V$ we have $u \not\equiv_n v$,
	then the modular set $S = \setof{s \in \N^d}{\prettyexists{u \in U}{s \equiv_n u}}$
	separates $U$ and $V$.
	\qed
\end{proof}

\begin{lemma}
	\label{lem:modular:separability:linear}
	Two linear sets $\set b + \lin^{\geq 0}(P)$ and $\set c + \lin^{\geq 0}(Q)$
	are not modular{\sepsep}separable if, and only if,
	$b - c \in \lin(P \cup Q)$.
\end{lemma}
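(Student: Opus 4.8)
The plan is to characterize modular-separability via Proposition~\ref{prop:modular-inseparable-pairs}, which says that two sets are modular-separable iff there is a modulus $n$ witnessing $u \not\equiv_n v$ for all $u,v$ in the respective sets. Applied to the two linear sets $U = \set b + \lin^{\geq 0}(P)$ and $V = \set c + \lin^{\geq 0}(Q)$, non-separability means that for \emph{every} $n > 0$ there exist a non-negative combination of periods from $P$ added to $b$ that is $\equiv_n$-congruent to a non-negative combination of periods from $Q$ added to $c$. I would rewrite this congruence by moving everything to one side: it says precisely that $b - c$ is congruent modulo $n$ to (a non-negative combination of $Q$) minus (a non-negative combination of $P$), i.e.\ to an element of $\lin(P \cup Q)$ reduced modulo $n$. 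The key is to recognize that this is exactly the statement $b - c \in \lin_n^{\geq 0}(P \cup Q)$, since modulo $n$ the sign restriction on coefficients is irrelevant (as noted in the excerpt, $\lin_n = \lin_n^{\geq 0}$, because one can add multiples of $n$ to any coefficient to make it non-negative).

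\emph{Second}, I would quantify over all $n$. Non-separability holds iff $b - c \in \lin_n^{\geq 0}(P \cup Q)$ for every $n > 0$, which is exactly $b - c \in \bigcap_{n > 0} \lin_n^{\geq 0}(P \cup Q)$. At this point I would invoke Lemma~\ref{lem:solution-modulo} directly: that intersection equals $\lin(P \cup Q)$. This collapses the infinite family of modular conditions into the single algebraic condition $b - c \in \lin(P \cup Q)$, giving the claimed equivalence.

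\emph{The care needed} is in the bookkeeping of the first step: one must verify that the congruence $b + \sum a_i p_i \equiv_n c + \sum a'_j q_j$ with $a_i, a'_j \in \N$ is equivalent to $b - c \equiv_n w$ for some $w \in \lin(P \cup Q)$, and conversely that any integer-coefficient witness $w$ can be realized by non-negative coefficients after adjusting modulo $n$. This is the place where the identity $\lin_n = \lin_n^{\geq 0}$ does the real work, so I would state it explicitly rather than gloss over the sign of the coefficients. I expect \textbf{the main obstacle} to be purely notational bookkeeping rather than a genuine mathematical difficulty: once the problem is phrased as membership in $\lin_n^{\geq 0}(P \cup Q)$ for all $n$, the substantive content is entirely contained in the already-proved Lemma~\ref{lem:solution-modulo}, whose right-to-left inclusion (the nontrivial direction, using the structure theorem for finitely generated abelian groups) supplies exactly the passage from ``congruent modulo every $n$'' to ``actually an integer combination.''
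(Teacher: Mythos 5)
Your proposal is correct and follows essentially the same route as the paper's own proof: both reduce non-separability to the existence, for every $n$, of an $\equiv_n$-congruent pair via Proposition~\ref{prop:modular-inseparable-pairs}, rewrite that congruence as $b - c \in \lin_n^{\geq 0}(P \cup Q)$ using the sign-irrelevance of coefficients modulo $n$ (the identity $\lin_n = \lin_n^{\geq 0}$), and then collapse the quantification over all $n$ by Lemma~\ref{lem:solution-modulo}. The only difference is presentational — the paper splits the argument into separate ``if'' and ``only if'' directions (negating $Q$ in the first to apply the lemma to $P \cup -Q$), while you phrase it as a single chain of equivalences — but the mathematical content is identical.
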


\begin{proof}
	Let $L = \set b + \lin^{\geq 0}(P)$ and $M = \set c + \lin^{\geq 0}(Q)$,
	with $P = \set{p_1, \ldots, p_m}$ and $Q = \set{q_1, \ldots, q_n}$.
	First we show the ``if'' direction.
	By Proposition~\ref{prop:modular-inseparable-pairs},
	it is enough to show that, for every $n \in \N$,
	there exist two vectors $u \in L$ and $v \in M$ s.t. $u \equiv_n v$.
	Fix an $n \in \N$.
	By assumption, 
	we have $b - c \in \lin(P \cup Q)$, and thus $c - b \in \lin(P \cup Q) = \lin(P \cup -Q)$.
	By Lemma~\ref{lem:lin:comb:mod},
	$c - b \in \lin_n^{\geq 0}(P \cup -Q)$,
	i.e., there exist
	$\delta \in \lin^{\geq 0}(P)$ and $\gamma \in \lin^{\geq 0}(Q)$ such that
	%
	$c - b \equiv_n \delta - \gamma$.
	%
	Thus, if we take $u = b + \delta$ and $v = c + \gamma$
	we clearly have $u - v = (b - c) + (\delta - \gamma) 
	\equiv_n (b - c) + (c - b) = 0$,
	and thus $u \equiv_n v$.

	For the ``only if'' direction,
	assume that $L$ and $M$ as above are not modular{\sepsep}separable.
	By Proposition~\ref{prop:modular-inseparable-pairs},
	for every $n \geq 0$ there exist vectors $u_n \in L$ and $v_n \in M$ s.t. $u_n \equiv_n v_n$.
	By definition,
	$u_n = b + \delta_n$ and $v_n = c + \gamma_n$,
	for some $\delta_n \in \lin^{\geq 0}(P)$ and $\gamma_n \in \lin^{\geq 0}(Q)$.
	Since $u_n \equiv_n v_n$, we have $b - c \equiv_n \gamma_n - \delta_n \in \lin(P \cup Q)$,
	%
	and thus $b - c \in \lin_n^{\geq 0}(P \cup Q)$.
	Since $n$ was arbitrary, by Lemma~\ref{lem:solution-modulo}
	we have $b - c \in \lin(P \cup Q)$,	as required. 
	\qed
\end{proof}
Since the condition in the lemma above is effectively testable being an instance of solvability of systems of linear Diophantine equations,
we get the following corollary:   

\begin{corollary}
	\label{cor:modular:separability:linear}
	Modular{\sepsep}separability of linear sets is decidable. 
\end{corollary}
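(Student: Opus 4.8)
The plan is to reduce the decision problem entirely to the membership test supplied by Lemma~\ref{lem:modular:separability:linear}. Given two linear sets $\set b + \lin^{\geq 0}(P)$ and $\set c + \lin^{\geq 0}(Q)$ presented by their bases $b, c \in \N^d$ and finite period sets $P = \set{p_1, \dots, p_m}$ and $Q = \set{q_1, \dots, q_n}$, the lemma tells us that the two sets fail to be modular{\sepsep}separable exactly when $b - c \in \lin(P \cup Q)$. Hence deciding modular{\sepsep}separability amounts to deciding this single membership and returning its negation.

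Next I would observe that, by the very definition of $\lin(\cdot)$, the membership $b - c \in \lin(P \cup Q)$ is the question whether there exist integer coefficients $a_1, \dots, a_{m+n} \in \Z$ with $a_1 p_1 + \dots + a_m p_m + a_{m+1} q_1 + \dots + a_{m+n} q_n = b - c$. Collecting the period vectors as the columns of an integer matrix $A \in \Z^{d \times (m+n)}$, this is precisely the solvability over $\Z$ of the linear system $A x = b - c$ in the unknown integer vector $x$.

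Finally I would invoke the classical decidability of solvability of systems of linear Diophantine equations: one computes, say, the Smith (or Hermite) normal form of $A$ and checks the resulting consistency and divisibility conditions on $b - c$, which yields an effective yes/no answer. Composing the three steps gives a decision procedure for modular{\sepsep}separability of linear sets.

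I expect no genuine obstacle in this corollary: all the substantive work has already been carried out in Lemma~\ref{lem:modular:separability:linear} (and, underneath it, in Lemma~\ref{lem:solution-modulo}), which converts the infinitary separability condition of Proposition~\ref{prop:modular-inseparable-pairs} into the finitary algebraic condition $b - c \in \lin(P \cup Q)$. The only remaining point is to recognize that this condition is a standard instance of integer linear algebra, whose decidability is textbook. The mild care needed is purely bookkeeping: pooling the periods of both sets together, and testing membership in the full linear span $\lin(P \cup Q)$ rather than in the non-negative cone $\lin^{\geq 0}(P \cup Q)$.
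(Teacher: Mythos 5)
Your proposal is correct and follows exactly the paper's route: the paper also derives the corollary directly from Lemma~\ref{lem:modular:separability:linear}, noting that the condition $b - c \in \lin(P \cup Q)$ is an instance of solvability of systems of linear Diophantine equations, which is decidable. You merely spell out the standard details (matrix formulation, Smith/Hermite normal form) that the paper leaves implicit.
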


\begin{remark}
	Since linear Diophantine equations are solvable in polynomial time,
	we obtain the same complexity for modular{\sepsep}separability of linear sets.
	This observation however will not be useful in the sequel.
\end{remark}

\paragraph{Unary{\sepsep}separability.}

We start with a characterization of unary{\sepsep}separability,
which is the same as Proposition~\ref{prop:modular-inseparable-pairs},
with unary equivalence $\cong_n$ in place of modular equivalence $\equiv_n$.
(Recall that unary equivalence is modular equivalence ``above a threshold'',
i.e., $u \cong_n v$ holds for two vectors $u, v \in \N^d$ if, for every component $1 \leq i \leq d$,
either $u[i] = v[i] \leq n$, or $u[i], v[i] \geq n$ and $u[i] \equiv_n v[i]$.)
\begin{proposition}\label{prop:inseparable-pairs}
	Two sets $U, V \subseteq \N^d$ are unary{\sepsep}separable if, and only if,
	there exists $n \in \N$ such that, for all $u \in U$ and $v \in V$, we have $u \not\cong_n v$.
\end{proposition}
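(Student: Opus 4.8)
The plan is to mirror exactly the proof of Proposition~\ref{prop:modular-inseparable-pairs}, substituting unary equivalence $\cong_n$ for modular equivalence $\equiv_n$ throughout. The statement asserts an equivalence, so I would prove the two directions separately. The easy direction is the contrapositive of the ``if'' in the problem's phrasing: if $U,V$ are unary{\sepsep}separable by some unary set $S$, I must exhibit an $n$ witnessing $u \not\cong_n v$ for all $u \in U$, $v \in V$. Since $S$ is unary, by definition there is a threshold $n \in \N$ such that $S$ is a union of $\cong_n$-equivalence classes. Then for any $u \in U \subseteq S$ and $v \in V$ with $V \cap S = \emptyset$, we have $u \in S$ and $v \notin S$. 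If $u \cong_n v$ held, then since $S$ is closed under $\cong_n$ and $u \in S$, we would get $v \in S$, contradicting $v \notin S$. Hence $u \not\cong_n v$, giving the required $n$.

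For the reverse direction, I would assume there exists $n \in \N$ with $u \not\cong_n v$ for all $u \in U$, $v \in V$, and construct an explicit unary separator. The natural candidate is the $\cong_n$-saturation of $U$, namely
\begin{align*}
	S = \setof{s \in \N^d}{\prettyexists{u \in U}{s \cong_n u}}\enspace.
\end{align*}
By construction $S$ is a union of $\cong_n$-equivalence classes, hence unary. Clearly $U \subseteq S$ since every $u \in U$ satisfies $u \cong_n u$. It remains to check $V \cap S = \emptyset$: if some $v \in V$ lay in $S$, then $v \cong_n u$ for some $u \in U$, and by symmetry of $\cong_n$ we get $u \cong_n v$, contradicting the hypothesis $u \not\cong_n v$. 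Thus $S$ separates $U$ from $V$.

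Both directions are essentially bookkeeping once one observes that unary equivalence $\cong_n$ is an equivalence relation (reflexive, symmetric, transitive) for each fixed threshold $n$, and that a unary set is by definition saturated under some $\cong_n$. There is no genuine obstacle here, since nothing in the modular proof relied on the specific arithmetic of $\equiv_n$ beyond its being an equivalence relation with the saturation property; the only subtlety worth flagging is the non-monotone behaviour of unary equivalence in $n$ (a finer threshold is not simply a coarser one, because the ``below threshold'' part enforces equality rather than congruence). The plan does not need monotonicity, however, because a single fixed $n$ is produced and used in each direction. I would therefore present the argument compactly, explicitly noting that it is identical to the proof of Proposition~\ref{prop:modular-inseparable-pairs} with $\cong_n$ replacing $\equiv_n$, so as not to belabour the routine verification.
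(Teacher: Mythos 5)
Your proof is correct and is exactly the argument the paper intends: the paper gives no separate proof of Proposition~\ref{prop:inseparable-pairs}, noting only that it is Proposition~\ref{prop:modular-inseparable-pairs} with $\cong_n$ in place of $\equiv_n$, and your two directions (a unary separator is $\cong_n$-saturated for some $n$, and conversely the $\cong_n$-saturation of $U$ serves as the separator) are precisely that substitution. Your remark that monotonicity of $\cong_n$ in $n$ is irrelevant here is also accurate, since a single fixed $n$ suffices in each direction.
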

%

%
We say that a set of vectors $U \subseteq \N^d$ is \emph{diagonal} if,
for every threshold $x \in \N$,
there exists a vector $u \in U$ which is strictly larger than $x$ in every component.
Let $I \subseteq \set{1, \dots, d}$ be a set of coordinates.
Two set of vectors $U, V \subseteq \N^d$ are \emph{$I$-linked}
if there exists a sectioning vector $u \in \N^{d-\card I}$
s.t. $\proj {\set {1, \dots, d} \setminus I} U = \proj {\set {1, \dots, d} \setminus I} V = \set{u}$
and $\proj I U$, $\proj I V$ are diagonal.
The sets $U, V$ are \emph{linked} if they are $I$-linked for some $I \subseteq \set{1, \dots, d}$.
\begin{lemma}
	\label{lem:F-linked}
	Let $U, V \subseteq \N^d$ be two linked linear sets. Then,
	$U$ and $V$ are unary{\sepsep}separable if, and only if, they are modular{\sepsep}separable.
	%
		%
		%
		%
		%
	%
\end{lemma}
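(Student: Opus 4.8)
The plan is to prove Lemma~\ref{lem:F-linked}, which asserts that for two linked linear sets, unary-separability and modular-separability coincide. One direction is free: since every modular set is unary (because $\cong_n$ refines $\equiv_n$), modular-separability always implies unary-separability, and this holds for arbitrary sets, not just linked linear ones. So the entire content is the converse: \emph{if} two linked linear sets $U,V$ are unary-separable, \emph{then} they are already modular-separable. Equivalently, by the contrapositive together with Proposition~\ref{prop:modular-inseparable-pairs} and Proposition~\ref{prop:inseparable-pairs}, I must show that if $U,V$ are \emph{not} modular-separable --- i.e.\ for every $n$ there are $u\in U$, $v\in V$ with $u\equiv_n v$ --- then they are not unary-separable either, i.e.\ for every $n$ there are $u'\in U$, $v'\in V$ with $u'\cong_n v'$. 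The whole point of the ``linked'' hypothesis is to let me upgrade a modular witness $u\equiv_n v$ into a unary witness $u'\cong_n v'$ by pushing all coordinates across the threshold.

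\textbf{Setup.} First I would fix the number $n$ and invoke non-modular-separability to obtain $u\in U$ and $v\in V$ with $u\equiv_n v$. By linkedness there is a sectioning vector $w\in\N^{d-\card I}$ such that $U,V$ agree with $w$ off the coordinates $I$, and $\proj I U,\proj I V$ are diagonal. On the coordinates outside $I$, both $u$ and $v$ equal $w$, so they trivially satisfy $u[i]=v[i]$ there; whatever threshold I use, these coordinates already satisfy the unary condition automatically (either both below or both above and equal). The trouble is only on the $I$-coordinates, where I have $u\equiv_n v$ but possibly $u[i]<n\le v[i]$ (or vice versa), which violates $\cong_n$. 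The plan is to use diagonality to replace $u$ and $v$ by larger vectors $u',v'$ that still lie in $U,V$, still agree with the originals modulo $n$ on $I$, but now sit uniformly \emph{above} the threshold on all of $I$.

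\textbf{The key step.} Since $U=\{b\}+\lin^{\geq 0}(P)$ and $V=\{c\}+\lin^{\geq 0}(Q)$ are linear, and $\proj I U$ is diagonal, there exists some $p\in\lin^{\geq 0}(P)$ that is strictly positive on every coordinate in $I$ (diagonality of the projection furnishes arbitrarily large vectors, and in a linear set such a vector is obtained by a suitable positive combination of periods). Then for any $N\in\N$ the vector $u+Nn\,p$ lies in $U$ and satisfies $u+Nn\,p\equiv_n u$, while its $I$-coordinates grow without bound as $N\to\infty$. Choosing $N$ large enough pushes every $I$-coordinate of $u$ above the threshold $n$; do the analogous thing for $v$ using a strictly-$I$-positive period $q\in\lin^{\geq 0}(Q)$ of $V$. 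Taking $u'=u+Nn\,p$ and $v'=v+N'n\,q$ with $N,N'$ large, I get $u'\equiv_n u\equiv_n v\equiv_n v'$, both $\geq n$ on all of $I$, and both equal to $w$ off $I$. Hence $u'\cong_n v'$ componentwise: on $I$ they are congruent mod $n$ and both above threshold, off $I$ they are equal. Since $n$ was arbitrary, Proposition~\ref{prop:inseparable-pairs} gives non-unary-separability, completing the contrapositive.

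\textbf{Main obstacle.} The only delicate point is extracting, from the diagonality of the \emph{projection} $\proj I U$, an actual period $p\in\lin^{\geq 0}(P)$ that is strictly positive on \emph{all} of $I$ simultaneously --- diagonality gives arbitrarily large \emph{points} of $\proj I U$, and I must translate ``arbitrarily large points'' into ``a single period direction that grows all $I$-coordinates.'' For a linear set this is routine: if points of $\proj I U$ exceed any bound, then for each $i\in I$ some period has a positive $i$-entry, and summing these (or, more carefully, using that a point exceeding $b$ on coordinate $i$ forces a period with a positive $i$-th entry) yields a nonnegative period combination strictly positive throughout $I$; I should make sure the argument handles all coordinates of $I$ at once rather than one at a time, which is why summing the per-coordinate witnesses is the clean move. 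Everything else is bookkeeping about keeping the off-$I$ coordinates pinned to $w$ and preserving congruence mod $n$ under adding multiples of $n\,p$.
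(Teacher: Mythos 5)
Your proposal is correct and follows essentially the same route as the paper's proof: one direction is trivial, and for the converse you upgrade each modular witness $u\equiv_n v$ to a unary witness by adding period increments whose components are multiples of $n$, using diagonality on $I$ to push all $I$-coordinates above the threshold while the off-$I$ coordinates stay pinned to the sectioning vector. Your extraction of a single period combination $p\in\lin^{\geq 0}(P)$ strictly positive on all of $I$ (by summing per-coordinate witnesses, valid since periods are non-negative) is exactly the justification behind the paper's terser claim that the increment ``can be chosen to have its components multiple of $n$.''
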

\begin{proof}
	Let $U$ and $V$ be two linked linear sets.
	One direction is obvious since modular{\sepsep}separability implies unary{\sepsep}separability.
	For the other direction, let $U$ and $V$ be modular{\sepsep}nonseparable,
	and we show that they are unary{\sepsep}nonseparable either.
	By Lemma~\ref{prop:modular-inseparable-pairs},
	there exists a sequence of vectors $u_n \in U$ and $v_n \in V$ s.t. $u_n \equiv_n v_n$.
	We construct a new sequence $u_n' \in U$ and $v_n' \in V$ s.t. $u_n' \cong_n v_n'$,
	which will then show that $U$ and $V$ are not unary{\sepsep}separable by Lemma~\ref{prop:inseparable-pairs}.
	Since $U$ and $V$ are linked,
	there exist a set of coordinates $I \subseteq \set{1, \dots, d}$
	and a sectioning vector for the remaining coordinates $u \in \N^{d - \card I}$
	s.t. 1) $\proj {\set {1, \dots, d} \setminus I} U = \proj {\set {1, \dots, d} \setminus I} V = \set{u}$
	and 2) $\proj I U, \proj I V$ are diagonal.
	In particular, by 1) the two sequences $u_n$ and $v_n$ project to $u$ on the complement of $I$,
	i.e., $\proj {\set {1, \dots, d} \setminus I} {u_n} = \proj {\set {1, \dots, d} \setminus I} {v_n} = \set u$.
	Moreover, for any $n \in \N$,
	since $\proj I {u_n} \in \proj I U$, and the latter set is diagonal by 2),
	there exists an increment $\delta_n \in \N^{\card I}$ s.t. $\proj I {u_n} \leq \proj I {u_n} + \delta_n \in \proj I U$.
	Moreover, since $U$ is a linear set, $\delta_n$ can be chosen to have its components multiple of $n$.
	Let $u_n'$ be $\proj I {u_n} + \delta_n$ on coordinates $I$, and $u$ on the remaining ones.
	By the choice of $\delta_n$, $u_n' \equiv_n u_n$, and, moreover,
	$u_n'$ is larger than $n$ on coordinates $I$.
	The vector $v_n'$ can be constructed similarly from $v_n$.
	We thus have $u_n' \cong_n v_n'$,
	since on coordinates $I$ both $u_n'$ and $v_n'$ are above $n$,
	and on the remaining coordinates they are equal to $u$.
	\qed
\end{proof}

\begin{remark}
	The unary{\sepsep}separability problem is decidable for linear sets, as shown in~\cite{DBLP:journals/ipl/ChoffrutG06}, 
	but we will not need this fact in the sequel.
	Moreover, it will follow from our stronger decidability result about the more general \vas reachability sets stated in Theorem~\ref{thm:unary-vas-sep}
	(since linear sets are included in \vas reachability sets).
\end{remark}

\ignore{
\begin{corollary}
	Unary{\sepsep}separability of linear sets is decidable.  \sla{do we need this fact?}
\end{corollary}
\begin{proof}
        Let $P$ and $Q$ be finite subset of $\N^k$ and $b,c\in  \N^k$. We want to decide if $S=b+\lin^{\geq 0}(P)$ and 
        $T=c+\lin^{\geq 0}(Q)$ are separable by unary set.

        Let $J_1$ (resp. $J_2$) be the maximal subset 
        of $\{1,\ldots,d\}$ such that $\proj {J_1} P = \set {0}$ (resp. $\proj {J_2} Q  = \set{0}$). 
        First assume that $J=J_1=J_2$ and let $I$ be $\set{1,\ldots,d}\setminus J$ and
        \begin{align*}
            u &= \proj {J} S = \proj {J} b\enspace,\\
            v &= \proj {J} T = \proj {J} c\enspace.
        \end{align*}
        Remark that $S$ and $T$ are $I$-linked if and only if $u=v$ since $\proj {J} S$  and $\proj {J} T$
        are diagonal.
        If $S$ and $T$ are $I$-linked then by Lemma~\ref{lem:F-linked}, the problem reduces 
        to modular{\sepsep}separation which is decidable, by Corollary~\ref{cor:modular:separability:linear}.
        We therefore assume $S$ and $T$ to be not linked. 
        Then, for some $i\in I$, we have $u_i$, the $i$th component of $u$ is distinct of $v_i$, 
        the $i$th component of $v$. Finally we conclude by remarking that  
        \[S\subseteq \N^{i-1}\times \{u_i\} \times \N^{k-i}\]
        and 
        \[T\cap \N^{i-1}\times \{u_i\} \times \N^{k-i} = \emptyset\enspace.\tag{a}\label{eqsepempty}\]
        Since $\N^{i-1}\times \{u_i\} \times \N^{k-i}$ is unary we conclude the case 
        where $J_1=J_2$.

        We now assume that $J_1\neq J_2$ and without loss of generality we can assume there exists 
        $i\in J_1\setminus J_2$. We remark that we still have 
        \[S\subseteq \N^{i-1}\times \{u_i\} \times \N^{k-i}\]
        but \eqref{eqsepempty} might not hold. However, since semilinear set are closed 
        under intersection $H'= T\cap \N^{i-1}\times \{u_i\} \times \N^{k-i}$ is a semilinear 
        set and therefore a union of linear set. We can reiterate 
        the process by taking $S$ together with each of these linear set. Remark that since the size of the symmetric difference between $J_1$ and $J_2$ decrease strictly 
        at each step, it will terminate and produce two linear set satisfying $J_1=J_2$, allowing us to conclude.

\end{proof}
}

\ignore{
\begin{lemma}
	\label{lem:unary:separability:linear}
	Two linear sets $\set b + \lin^{\geq 0}(P)$ and $\set c + \lin^{\geq 0}(Q)$
	are not unary{\sepsep}separable if, and only if,
	they are not modular{\sepsep}separable, and, additionally,
	there exists a non-empty subset of \emph{unbounded coordinates} $F \subseteq \{1, \dots, d\}$ s.t.
	\lorenzo{todo}
	%
	%
\end{lemma}

}

\ignore{

\begin{proof}
	Let $L := \set b + \lin^{\geq 0}(P)$ and $M := \set c + \lin^{\geq 0}(Q)$ be two linear sets.
	For the ``only if'' direction,
	assume that $L$ and $M$ are not unary{\sepsep}separable.
	Since unary sets generalize modular sets, $L$ and $M$ are not modular{\sepsep}separable either.
	It remains to show how to find a set of strict coordinates $F \subseteq \{1, \dots, d\}$ s.t.
	Eq.~\ref{eq:strict:coordinates} holds.
	By Proposition~\ref{prop:inseparable-pairs},
	there exists two infinite sequences
	$x_0, x_1, \dots \in L$ and $y_0, y_1, \dots \in M$
	s.t. $x_n \cong_n y_n$ for all $n \in \N$.
	The two sequences $x_i$'s and $y_i$'s are unbounded in the same set of coordinates.
	Let $F$ be this set.
	By eliminating a sufficiently long prefix of these two sequences,
	we can further assume that bounded coordinates are in fact constant,
	and morever this constant is in fact the same vector for both sequences,
	and thus 
	\begin{align} \label{eq:B}
		\prettyforall {m, n \in \N}{\prettyforall {i \in \{ 1, \dots, d \} \setminus F} {x_m[i] = y_n[i]}}.
	\end{align}
	Moreover, if $i \in F$ then w.l.o.g. we further may assume that the two sequences
	$x_0[i] < x_1[i] < \ldots$ and
	$y_0[i] < y_1[i] < \ldots$ are strictly increasing,
	and thus in particular $x_0[i] < x_n[i]$ and $x_0[i] < y_n[i]$ for every $n \in \N$,
	which implies $\proj F {\delta_i}, \proj F {\gamma_i} > 0$ for every $i \in \{1, \dots, k\}$.
	On the other hand, if $j \in \{1, \dots, d \} \setminus F$,
	from property~\eqref{eq:B} we have
	$\proj {\{1, \dots, d\} \setminus F} {\Delta_0} = \proj {\{1, \dots, d\} \setminus F} {\delta_i}
	= \proj {\{1, \dots, d\} \setminus F} {\gamma_i} = 0$,
	as required.
	
	For the ``if'' direction,
	assume that $L$ and $M$ are not modular{\sepsep}separable,
	and assume $F$ is a given set of unbounded coordinates.

	In order to show $x \cong_n y$ as required,
	it remains to show that $x[i] \geq n \iff y[i] \geq n$ for every $i \in \{1, \dots, d\}$.
	We distinguish two cases.
	If $i \in \{1, \dots, d\} \setminus F$ is a bounded coordinate,
	then $x[i] = b[i]$ and $y$ for all $j \in \{1, \ldots, k\}$.
	Assume now that $i \in F$ is an unbounded coordinate.
	For all $j \in \{1, \ldots, k\}$ we have $\delta_j[i], \gamma_j[i] > 0$.
	From $b_1 + \ldots + b_k \geq n$ we have $\proj I u[i], \proj J v[i] \geq n$, as required.

\end{proof}

}


\section{Modular{\sepsep}separability of \vas{\sepsep}sections} 
\label{sec:modular-vas-sep}

In this section we prove Theorem~\ref{thm:modular-vas-sep}, 
and thus provide an algorithm to decide modular{\sepsep}separability for \vasreach sets.
Given two \vas{\sepsep}sections $U$ and $V$,
the algorithm runs in parallel two semi-decision procedures: one (positive) which looks for a witness of separability,
and another one (negative) which looks for a witness of nonseparability.
Directly from the characterization of Proposition~\ref{prop:modular-inseparable-pairs},
the positive semi-decision procedure simply enumerates all candidate moduli $n\in \N$ and checks whether 
$u \not\equiv_n v$ for all $u \in U$ and $v \in V$.
The latter condition can be decided by reduction to the \vas (non)reachability problem \cite{DBLP:conf/stoc/Mayr81,DBLP:conf/lics/LerouxS15}.
\begin{lemma}\label{lem:modular-positive-witness}
	For two VAS{\sepsep}sections $U$ and $V$ and a modulus $n \in \N$,
	it is decidable whether there exist $u \in U$ and $v \in V$ s.t. $u \equiv_n v$.
\end{lemma}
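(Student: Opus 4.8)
The plan is to reduce the existence of a modular coincidence between $U$ and $V$ to a non-emptiness test for a single \vas{\sepsep}section, and then to reduce that test to the \vas reachability problem, which is decidable~\cite{DBLP:conf/stoc/Mayr81,DBLP:conf/lics/LerouxS15}.

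First I would encode the modular constraint as a \vas{\sepsep}section. Writing $e_1, \dots, e_{2d}$ for the unit vectors of $\N^{2d}$, consider the \emph{modular diagonal}
\[
	\Delta_n := \setof{w \in \N^{2d}}{\proj{\{1,\dots,d\}}{w} \equiv_n \proj{\{d+1,\dots,2d\}}{w}}.
\]
This is a modular subset of $\N^{2d}$, i.e.\ a finite union of $\equiv_n$-classes; each such class is a linear set $\set c + \lin^{\geq 0}(n e_1, \dots, n e_{2d})$ and is therefore a \vasreach set. Since \vas{\sepsep}sections are closed under finite union (Proposition~\ref{prop:union:intersection}), $\Delta_n$ is itself a \vas{\sepsep}section. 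Next, the product $U \times V \subseteq \N^{2d}$ is a \vas{\sepsep}section as well: placing the two \vases witnessing $U$ and $V$ on disjoint coordinate blocks and running them one after the other under control of a state yields a \vass whose reachability set, suitably sectioned, equals $U \times V$ (and sections of \vasses are sections of \vases). By closure under intersection (Proposition~\ref{prop:union:intersection}) the set $R := (U \times V) \cap \Delta_n$ is again a \vas{\sepsep}section, and by construction $R \neq \emptyset$ if and only if there exist $u \in U$ and $v \in V$ with $u \equiv_n v$.

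It remains to decide non-emptiness of $R$, presented as $R = \sec{I,w}{\reach(W)}$ for some \vas $W$. I would reduce this to plain \vas reachability by a \emph{draining} construction: augment $W$ with a transition $-e_i$ for every free (i.e.\ projected) coordinate $i \in I$, and ask whether the fixed target that carries the value $w$ on the sectioned coordinates and $0$ on the drained coordinates is reachable in the augmented \vas. The forward implication is immediate --- from any configuration in the section one drains the free coordinates down to $0$. For the converse, note that deleting the draining transitions from a run reaching the target leaves the sectioned coordinates unchanged and can only increase the free coordinates, so the remaining transitions still form a valid run of $W$, and it reaches a configuration lying in the section. The hard part is precisely this last reduction: the cited decidability result concerns reaching an \emph{exact} target, whereas section non-emptiness lets the projected coordinates range freely, and one must verify that the draining transitions add no spurious witnesses. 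Once this is settled, decidability of \vas reachability~\cite{DBLP:conf/stoc/Mayr81,DBLP:conf/lics/LerouxS15} finishes the proof.
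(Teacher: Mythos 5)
Your proof is correct, but it is packaged quite differently from the paper's. The paper argues directly on the two given \vases: it adds to each a transition decrementing every non-fixed coordinate by $n$ (so that values can be reduced modulo $n$), and then asks whether the two modified \vases admit a pair of reachable vectors that carry the fixed values $\bar u$, resp.\ $\bar v$, on the fixed coordinates and are equal and smaller than $n$ on the non-fixed ones --- a single ad hoc reduction to \vas reachability. You instead proceed compositionally: you realize the relation $\equiv_n$ itself as a \vas{\sepsep}section (the diagonal $\Delta_n$, whose $\equiv_n$-classes are linear sets with periods $n e_i$), form the product $U \times V$ via a \vass, intersect using Proposition~\ref{prop:union:intersection}, and then reduce non-emptiness of a section to exact-target reachability by your draining argument, whose converse direction (deleting draining steps keeps the run valid and leaves the sectioned coordinates untouched) is the right and complete justification. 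The two proofs ultimately rest on the same two tricks --- decrements that discard multiples of $n$, and a forward/backward ``meeting of two \vases'' reduction to reachability, which in your version is hidden inside the intersection construction of Proposition~\ref{prop:union:intersection} rather than spelled out. What your route buys is reusability and a cleaner separation of concerns (the draining lemma and the observation that modular sets are \vas{\sepsep}sections are both of independent use); what it costs is reliance on Proposition~\ref{prop:union:intersection}, which the paper itself only sketches, so a fully self-contained write-up along your lines would still have to expand that closure proof, whereas the paper's direct construction avoids the product/intersection machinery altogether.
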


\begin{proof}
	Recall that $U$ is obtained from the reachability set of a \vas by fixing values $\bar u$ on some coordinates, 
	and projecting to the remaining coordinates; and likewise $V$ is obtained, by fixing values $\bar v$ on some coordinates. 
	We modify the two \vases by allowing each non-fixed coordinate to be decremented by $n$,
	and we check whether the two thus modified \vases admit a pair of reachable vectors $u, v$
	that agree on fixed coordinates with $\bar u$ and $\bar v$, respectively,
	and on all the non-fixed coordinates are equal and smaller than $n$.
	\qed
\end{proof}

It remains to design the negative semi-decision procedure, which is the nontrivial part.
In Lemma~\ref{lem:modular:nonsep:witness}, we show that if two \vasreach sets are not modular{\sepsep}separable,
then in fact they already contain two \emph{linear} subsets which are not modular{\sepsep}separable.
In order to construct such linear witnesses of nonseparablity,
we use the theory of well quasi orders and some elementary results in algebra,
which we present next.

\paragraph{The order on runs.}
A quasi order $(X, \preccurlyeq)$ is a \emph{well quasi order} (wqo) if for every infinite sequence $x_0, x_1, \ldots \in X$
there exist indices $i, j \in \N, i < j$, such that $x_i \preccurlyeq x_j$.
It is folklore that if $(X, \preccurlyeq)$ is a wqo, then in every infinite sequence
$x_0, x_1, \ldots \in X$ there even exists an infinite monotonically non-decreasing subsequence $x_{i_1} \preccurlyeq x_{i_2} \preccurlyeq \ldots$.
We will use Dickson's and Higman's Lemmas to define new wqo's on pairs and sequences.
For two quasi orders $(X, \leq_X)$ and $(Y, \leq_Y)$,
let the product $(X \times Y, \leq_{X \times Y})$ be ordered componentwise by $(x, y) \leq_{X \times Y} (x', y')$ if $x \leq_X x'$ and $y \leq_Y y'$.
By Dickson's Lemma~\cite{Dickson}, if both $(X, \leq_X)$ and $(Y, \leq_Y)$ are wqos, then $(X \times Y, \leq_{X \times Y})$ is a wqo too.
As a corollary of Dickson's Lemma, if two quasi orders $(X, \leq_1)$ and $(X, \leq_2)$ on the same domain are wqos, then the quasi order defined as the conjunction of $\leq_1$ and $\leq_2$ is a wqo too.
For a quasi order $(X, \leq)$,
let $(X^*, \leq_*)$ be quasi ordered by the subsequence order $\leq_*$,
defined as $x_1 x_2 \cdots x_k \leq_* y_1 y_2 \ldots y_m$ if there exist $1 \leq i_1 < \ldots < i_k \leq m$ such that
$x_j \leq y_{i_j}$ for all $j \in \{1, \ldots, k\}$.
By Higman's Lemma~\cite{Higman}, if $(X, \leq)$ is a wqo then $(X^*, \leq_*)$ is a wqo too.

By considering the finite set of transitions $T$ well quasi ordered by equality,
we define the order $\leq^1$ on triples $\N^d \times T \times \N^d$ componentwise as
$(u, s, u') \leq^1 (v, t, v')$ if $u \leq v$, $s = t$, and $u' \leq v'$,
which is a wqo by Dickson's Lemma.
%
%
We further extend $\leq^1$ to an order $\runord$ on runs
by defining, for two runs $\rho$ and $\sigma$ in $(\N^d \times T \times \N^d)^*$,
$\rho \runord \sigma$ if $\rho \leq^1_* \sigma$ and $\target{\rho} \leq \target{\sigma}$.%
\footnote{A weaker version of this order not considering target configurations was defined in~\cite{DBLP:journals/tcs/Jancar90}.}
Here, $\leq^1_*$ is the extension of $\leq^1$ to sequences, and thus a wqo by Higman's Lemma,
which implies that $\runord$ is itself a wqo by the corollary of Dickson's Lemma.

\begin{proposition}
	\label{prop:run-wqo}
	$\runord$ is a well quasi order.
\end{proposition}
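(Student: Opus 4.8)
The plan is to assemble $\runord$ from simpler well quasi orders using the closure lemmas that have just been recalled (Dickson's Lemma, Higman's Lemma, and the corollary that a conjunction of two wqos on the same domain is again a wqo). The statement to prove is that $\runord$ is a wqo, where $\rho \runord \sigma$ holds iff $\rho \leq^1_* \sigma$ and $\target{\rho} \leq \target{\sigma}$. Since all the pieces have already been established in the text preceding the proposition, the proof is essentially a matter of citing them in the right order; there is no genuine obstacle, only the bookkeeping of checking that $\runord$ really is the conjunction of two wqos on the common domain $(\N^d \times T \times \N^d)^*$.

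First I would recall that $\leq^1$ on the triple domain $\N^d \times T \times \N^d$ is a wqo: it is the componentwise order, with $\N^d$ and $\N^d$ ordered by the usual (coordinatewise) wqo order and $T$ ordered by equality (a wqo because $T$ is finite), so Dickson's Lemma applies. Next I would apply Higman's Lemma to conclude that the subsequence extension $\leq^1_*$ on $(\N^d \times T \times \N^d)^*$ is itself a wqo. This gives the first of the two orders whose conjunction defines $\runord$.

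For the second order, I would consider the target comparison: define $\rho \preccurlyeq_{\target{}} \sigma$ iff $\target{\rho} \leq \target{\sigma}$. This is a quasi order on runs that is the pullback along the map $\rho \mapsto \target{\rho}$ of the wqo $(\N^d, \leq)$. A pullback of a wqo along an arbitrary map is again a wqo (any infinite sequence of runs induces an infinite sequence of targets, in which Dickson's Lemma provides a comparable pair $i < j$ with $\target{\rho_i} \leq \target{\rho_j}$). Hence $\preccurlyeq_{\target{}}$ is a wqo on the domain of runs.

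Finally, by definition $\rho \runord \sigma$ holds exactly when both $\rho \leq^1_* \sigma$ and $\rho \preccurlyeq_{\target{}} \sigma$ hold, i.e.\ $\runord$ is the conjunction of the two wqos $\leq^1_*$ and $\preccurlyeq_{\target{}}$ on the common domain $(\N^d \times T \times \N^d)^*$ of runs. By the corollary of Dickson's Lemma stated earlier in the excerpt, the conjunction of two wqos on the same domain is again a wqo, so $\runord$ is a wqo, which is what we wanted. The only point that warrants a word of care is that runs are \emph{sequences} of triples rather than arbitrary elements, so one must note that both $\leq^1_*$ and $\preccurlyeq_{\target{}}$ are defined on this same sequence domain before invoking the conjunction corollary; once that is observed the argument closes immediately.
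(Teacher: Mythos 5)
Your proof is correct and follows essentially the same route as the paper: Dickson's Lemma for $\leq^1$ on triples, Higman's Lemma for $\leq^1_*$ on sequences, and the conjunction corollary applied to $\leq^1_*$ together with the target-comparison order (the paper leaves the pullback observation for the target order implicit, but it is the same argument).
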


\begin{lemma}\label{lem:additive}
	Let $\rho$, $\rho_1$, and $\rho_2$ be runs of a \vas s.t. $\rho \runord \rho_1, \rho_2$.
	There exists a run $\rho'$ s.t. $\rho \runord \rho'$
	and $\target{\rho'} - \target{\rho} = (\target{\rho_1} - \target{\rho}) + (\target{\rho_2} - \target\rho)$.
\end{lemma}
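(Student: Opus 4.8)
The plan is to build $\rho'$ by shuffling together the ``extra'' parts of $\rho_1$ and $\rho_2$ that lie beyond the common smaller run $\rho$, and to control the resulting configurations via the monotonicity of VAS transitions. First I would exploit the two hypotheses $\rho \runord \rho_1$ and $\rho \runord \rho_2$. Writing the transitions of $\rho$ as $a_1, \ldots, a_k$ with intermediate configurations $\source{\rho} = c_0 \trans{a_1} c_1 \cdots \trans{a_k} c_k = \target{\rho}$, the subsequence component $\leq^1_*$ of $\runord$ marks a copy of $a_1, \ldots, a_k$ inside each larger run, so that the transition words factor as
\[
\rho_1 = P_0\, a_1\, P_1 \cdots a_k\, P_k, \qquad \rho_2 = Q_0\, a_1\, Q_1 \cdots a_k\, Q_k,
\]
where the blocks $P_j$ (resp. $Q_j$) collect the inserted transitions. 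Let $g_j$ (resp. $h_j$) denote the configuration reached after block $P_j$ in $\rho_1$ (resp. after $Q_j$ in $\rho_2$). The key bookkeeping is that $\runord$ forces $c_j \leq g_j$ and $c_j \leq h_j$ for every $j \in \{0, \ldots, k\}$: for $j<k$ these come from the pointwise domination of matched triples in the $\leq^1$ order, while the boundary case $j=k$ is exactly the target-domination $\target{\rho} \leq \target{\rho_1}, \target{\rho_2}$ built into the definition of $\runord$.

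I would then define $\rho'$ as the run obtained by interleaving the two families of extra blocks slot by slot,
\[
\rho' = P_0\, Q_0\, a_1\, P_1\, Q_1\, a_2 \cdots a_k\, P_k\, Q_k.
\]
Additivity of the target is automatic from displacement counting: the blocks $P_0, \ldots, P_k$ contribute the net vector $\target{\rho_1} - \target{\rho}$ and the blocks $Q_0, \ldots, Q_k$ contribute $\target{\rho_2} - \target{\rho}$, on top of the displacement of $a_1 \cdots a_k$. Writing $f_j$ for the configuration of $\rho'$ at the end of slot $j$, this yields the clean identity $f_j = g_j + h_j - c_j$, and in particular $\target{\rho'} = f_k = \target{\rho_1} + \target{\rho_2} - \target{\rho}$, which is precisely the claimed equality $\target{\rho'} - \target{\rho} = (\target{\rho_1} - \target{\rho}) + (\target{\rho_2} - \target{\rho})$.

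The real work, and the step I expect to be the main obstacle, is checking that $\rho'$ is a genuine run, i.e. that no intermediate configuration becomes negative. Here I would use the monotonicity of VAS transitions: if $u \trans{\alpha} u'$ is a valid partial run and $w \geq 0$, then $u + w \trans{\alpha} u' + w$ is valid as well. Concretely, block $P_j$ in $\rho'$ starts from $f_{j-1} + a_j$, which dominates the configuration $g_{j-1} + a_j$ from which $P_j$ ran validly inside $\rho_1$, the constant offset being $f_{j-1} - g_{j-1} = h_{j-1} - c_{j-1} \geq 0$; and block $Q_j$ starts from $f_{j-1} + a_j + \pi_j$, where $\pi_j$ is the displacement of $P_j$, which dominates the configuration $h_{j-1} + a_j$ from which $Q_j$ ran validly inside $\rho_2$, the offset simplifying to $g_j - c_j \geq 0$. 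It is exactly this last inequality, at the final slot $j=k$, that invokes the target-domination clause of $\runord$; a weaker order on runs ignoring target configurations would not guarantee it, which explains why $\runord$ is defined the way it is. Once validity is secured, $\rho \runord \rho'$ is immediate: $a_1, \ldots, a_k$ occurs as a subsequence with $c_{j-1} \leq f_{j-1}$ (hence $c_j \leq f_{j-1} + a_j$), and $\target{\rho} = c_k \leq f_k = \target{\rho'}$.
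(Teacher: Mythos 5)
Your proof is correct and follows essentially the same route as the paper's: factor $\rho_1$ and $\rho_2$ along the embedded copy of $\rho$, interleave the inserted blocks slot by slot ($P_j$ then $Q_j$ before each next transition of $\rho$), and use monotonicity of VAS steps to justify validity, yielding the additive target $\target{\rho_1}+\target{\rho_2}-\target{\rho}$. In fact your write-up is more explicit than the paper's (which asserts validity of the interleaved run as clear), in particular in pinpointing where the target-domination clause of $\runord$ is needed.
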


\begin{proof}
	The proof is almost identical to the proof of Proposition 5.1. in~\cite{DBLP:journals/corr/LerouxS15}.
	Let the \vas be $(s, T)$, and let $\rho = v_0 \trans{t_0} v_1 \trans{t_1} \cdots \trans{t_{n-1}} v_n$, where $v_0 = s$.
	Then $\rho_i$, for $i \in \{1, 2\}$ is of the form
	\begin{eqnarray*}
		\rho_i = & v_0 \trans{\rho^i_0} v_0 + \delta^i_0 \trans{t_0} v_1 + \delta^i_0 \trans{\rho^i_1} v_1 + \delta^i_1
		\trans{t_1} v_1 + \delta^i_2 \trans{\rho^i_2} \cdots \\
		& \trans{\rho^i_{n-1}} v_{n-1} + \delta^i_{n-1} \trans{t_{n-1}} v_n + \delta^i_{n-1} \trans{\rho^i_n} v_n + \delta^i_n\enspace,
	\end{eqnarray*}
	where for all $i \in \{1, 2\}$ and $j \in \{0, \ldots, n\}$ we have $\delta^i_j \geq 0$.
	Thus by letting $\rho' := \rho^1_0 \rho^2_0 t_0 \rho^1_1 \rho^2_1 t_1 \rho^1_2 \rho^2_2
	\cdots \rho^1_{n-1} \rho^2_{n-1} t_{n-1} \rho^1_n \rho^2_n$
	we clearly have a run $v_0 \trans{\rho'} v_n + \delta^1_n + \delta^2_n$
	which indeed looks like
	\begin{eqnarray*}
		v_0 \trans{\rho^1_0} v_0 + \delta^1_0 \trans{\rho^2_0} v_0 + \delta^1_0 + \delta^2_0 \trans{t_0} v_1 + \delta^1_0 + \delta^2_0 \\
		\trans{\rho^1_1} v_1 + \delta^1_1 + \delta^2_0 \trans{\rho^2_1} v_1 + \delta^1_1 + \delta^2_1 \trans{t_1}
		v_2 + \delta^1_1 + \delta^2_1 \\
		\trans{\rho^1_2} \cdots \trans{t_{n-1}} v_n + \delta^1_{n-1} + \delta^2_{n-1} \\
		\trans{\rho^1_n} v_n + \delta^1_n + \delta^2_{n-1} \trans{\rho^2_n} v_n + \delta^1_n + \delta^2_n.
	\end{eqnarray*}
	This finishes the proof of Lemma~\ref{lem:additive}.
	\qed
\end{proof}
We formulate an immediate but useful corollary:

\begin{corollary}\label{cor:summing-runs}
	Let $\rho_0, \rho_1, \ldots, \rho_k$ be runs of a VAS
	s.t., for all $i \in \{1, \ldots, k\}$, $\rho_0 \runord \rho_i$,
	and let $\delta_i := \target{\rho_i} - \target{\rho_0} \geq 0$.
	For any $\delta \in \lin^{\geq 0}(\delta_1, \dots, \delta_k)$,
	there exists a run $\rho$ s.t. $\rho_0 \runord \rho$ and $\delta = \target{\rho} - \target{\rho_0}$.
\end{corollary}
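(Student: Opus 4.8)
The statement to prove is Corollary~\ref{cor:summing-runs}: given runs $\rho_0, \rho_1, \ldots, \rho_k$ with $\rho_0 \runord \rho_i$ for all $i$, and increments $\delta_i := \target{\rho_i} - \target{\rho_0} \geq 0$, for any $\delta \in \lin^{\geq 0}(\delta_1, \dots, \delta_k)$ there is a run $\rho$ with $\rho_0 \runord \rho$ and $\delta = \target{\rho} - \target{\rho_0}$.

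The plan is to derive this purely from Lemma~\ref{lem:additive} by induction on the total number of summands in the non-negative linear combination. First I would record the base case: if $\delta = 0$, then $\rho := \rho_0$ works, since $\rho_0 \runord \rho_0$ by reflexivity of $\runord$ and $\target{\rho_0} - \target{\rho_0} = 0$. Writing $\delta = a_1 \delta_1 + \ldots + a_k \delta_k$ with $a_1, \dots, a_k \in \N$, I induct on the sum $m := a_1 + \ldots + a_k$. The base case $m = 0$ is exactly the case just handled.

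For the inductive step with $m > 0$, I would pick some index $j$ with $a_j > 0$ and split off one copy of $\delta_j$. By the induction hypothesis applied to the combination $\delta' := \delta - \delta_j \in \lin^{\geq 0}(\delta_1, \dots, \delta_k)$, whose coefficient sum is $m-1$, there is a run $\sigma$ with $\rho_0 \runord \sigma$ and $\target{\sigma} - \target{\rho_0} = \delta'$. Now I apply Lemma~\ref{lem:additive} with the two runs dominating $\rho_0$ taken to be $\rho_1 := \sigma$ and $\rho_2 := \rho_j$; both dominate $\rho_0$ (the former by construction, the latter by hypothesis). The lemma yields a run $\rho$ with $\rho_0 \runord \rho$ and
\[
\target{\rho} - \target{\rho_0} = (\target{\sigma} - \target{\rho_0}) + (\target{\rho_j} - \target{\rho_0}) = \delta' + \delta_j = \delta,
\]
which completes the induction.

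The only point requiring care is checking that the roles in Lemma~\ref{lem:additive} are correctly instantiated, i.e., that both runs fed into it genuinely dominate the fixed base run $\rho_0$ in the order $\runord$; this is immediate here, so I do not expect a real obstacle. The main conceptual work has already been done in Lemma~\ref{lem:additive}, and this corollary is essentially a bookkeeping argument promoting the additivity of single increments to arbitrary non-negative integer combinations.
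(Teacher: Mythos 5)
Your proof is correct and follows exactly the route the paper intends: the paper states this corollary as ``immediate'' from Lemma~\ref{lem:additive} without giving details, and your induction on the coefficient sum $a_1 + \ldots + a_k$, splitting off one period at a time and invoking Lemma~\ref{lem:additive} with base run $\rho_0$, is precisely the bookkeeping that justifies that claim. The base case via reflexivity of $\runord$ and the instantiation of the lemma's two dominating runs as $\sigma$ and $\rho_j$ are both handled correctly.
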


We conclude this part by showing that any (possibly infinite) subset of $\Z^d$
can be overapproximated by taking linear combinations of a \emph{finite} subset thereof.
This will be important below in order to construct linear sets as witnesses of nonseparability.
\begin{lemma}\label{lem:finite-base}
	For every (possibly infinite) set of vectors $S \subseteq \Z^d$,
	there exist finitely many vectors $v_1, \dots, v_k \in S$
	s.t. $S \subseteq \lin(v_1, \dots, v_k)$.
\end{lemma}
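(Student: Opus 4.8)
The plan is to exploit the fact that $\lin(S)$, the set of all integer linear combinations of vectors in $S$, is a subgroup of $\Z^d$, together with the fact that every subgroup of $\Z^d$ is finitely generated. The only subtlety is that finite generation a priori yields generators that are arbitrary elements of the subgroup, whereas the statement demands generators drawn from $S$ itself; this is the step that requires a little extra care, and is the main obstacle.

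First I would observe that $G := \lin(S)$ is a subgroup of $\Z^d$, being closed under addition and negation. Since $\Z^d$ is a finitely generated abelian group, every subgroup of it is again finitely generated (this is standard; cf. the structure theory invoked via Theorem~2.2 in~\cite{Hungerford74}). Hence there exist finitely many vectors $w_1, \dots, w_m \in G$ with $G = \lin(w_1, \dots, w_m)$.

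Next, since $G = \lin(S)$, each generator $w_j$ is by definition a finite integer combination of vectors of $S$. Let $v_1, \dots, v_k \in S$ be the finitely many vectors of $S$ that occur, collected together, across these $m$ representations. Then $w_1, \dots, w_m \in \lin(v_1, \dots, v_k)$, and therefore
\[
	G \ = \ \lin(w_1, \dots, w_m) \ \subseteq \ \lin(v_1, \dots, v_k) \ \subseteq \ \lin(S) \ = \ G,
\]
where the middle inclusion holds because each $v_i \in S$. Thus $\lin(v_1, \dots, v_k) = G$, and since trivially $S \subseteq \lin(S) = G$, we conclude $S \subseteq \lin(v_1, \dots, v_k)$, as required.

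As indicated, the crux is precisely the passage from an abstract finite generating set of $G$ to one consisting of elements of $S$; this is resolved by the elementary observation that each of the finitely many generators can be written using only finitely many vectors of $S$, so that altogether only finitely many vectors of $S$ suffice. An alternative self-contained route avoiding the structure theorem would be induction on the dimension $d$, using that in dimension one $\lin(S) = m\Z$ with $m$ realizable as the greatest common divisor of finitely many elements of $S$; but the subgroup argument above is shorter and I would prefer it.
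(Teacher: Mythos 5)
Your proof is correct and follows essentially the same route as the paper: both treat $\lin(S)$ as a subgroup of the finitely generated abelian group $\Z^d$, invoke the theorem that such subgroups are finitely generated, and then replace the abstract generators by the finitely many elements of $S$ occurring in their representations. No gap to report.
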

\begin{proof}
	Treat $\Z^d$ as a freely finitely generated abelian group,
	and consider the subgroup $\lin(S)$ of $\Z^d$ generated by $S$, 
	i.e., the subgroup containing all linear combinations of finitely many elements of $S$.
	We use the following result in algebra: 
	every subgroup of a finitely generated abelian group is finitely generated (see for instance Corollary 1.7, p.~74, in~\cite{Hungerford74}).
	By this result applied to $\lin(S)$ we get a finite set of generators $F \subseteq \lin(S)$ s.t. $\lin(F) = \lin(S)$.
	Every element of $F$ is a linear combination of finitely many elements of $S$.
	Thus let $v_1, \ldots, v_k$	be all the elements of $S$ appearing as a linear combination of some element from $F$.
	Then clearly $F \subseteq \lin(v_1, \dots, v_k)$,
	and thus $S \subseteq \lin(S) = \lin(F) \subseteq \lin(\lin(v_1, \dots, v_k)) = \lin(v_1, \dots, v_k)$, as required.
	\qed
\end{proof}

\begin{remark}
	In fact one can show that the generating set $F$ has at most $d$ elements. However, no upper bound on $k$ follows, 
	and even for $d=1$ the number of vectors $k$ can be arbitrarily large.
	Indeed, let $p_1, \ldots, p_k$ be different prime numbers, let $u_i = (p_1 \cdot \ldots \cdot p_k) / p_i$
	and $S = \{u_1, \ldots, u_k\}$.
	Then for every $i \in \{1, \ldots, k\}$, the number $u_i$ is not a linear combination of numbers $u_j$, $j \neq i$, 
	as $u_i$ is not divisible by $p_i$, while all the others are. Therefore we need all the elements of $S$ in the set $\{v_1, \ldots, v_k\}$.
\end{remark}

\paragraph{Modular{\sepsep}nonseparability witness.}
We now concentrate on the negative semi-decision procedure.
Let $U, V \subseteq \N^d$ be two \vas{\sepsep}sections:
	\[U = \sec {I, \bar u} {R_U} \subseteq \N^d \quad \textrm{ and } \quad V = \sec {J, \bar v} {R_V} \subseteq \N^d,\]
	where $R_U \subseteq \N^{d_U}$ and $R_V \subseteq \N^{d_V}$ are the reachability sets
	of the two \vases $W_U$ and $W_V$, 
	and $I \subseteq \{1, \ldots, d_U\}$ and $J \subseteq \{1, \ldots, d_V\}$
	with $|I| = |J| = d$ are projecting coordinates,
	and $\bar u \in \N^{d_U - d}, \bar v \in \N^{d_V - d}$ are two sectioning vectors.

	Observe that by padding coordinates we can assume
	w.l.o.g.~that the two input \vases have the same dimension $d' = d_U = d_V$.
	Furthermore, we can also assume w.l.o.g.~that $\bar u = \bar v = 0$.
	Indeed, one can add an additional coordinate, such that for performing any transition it is necessary that
	this coordinate is nonzero and a special, final transition, which causes the additional coordinate to be equal zero
	and subtracts $\bar u$ (or $\bar v$) from the other coordinates. The result of adding this gadget is that now
	we can assume $\bar u = \bar v = 0$, but the section itself does not change.
	
	Finally, by reordering coordinates we can guarantee that the coordinates that are projected away appear on the same positions
	in both \vases, i.e., $I = J$.
	With these assumptions, we observe that modular{\sepsep}separability of sets $U, V \subseteq \N^d$ is equivalent to
	modular{\sepsep}separability of sets $U', V' \subseteq \N^{d'}$, defined as $U, V$ but
	\emph{without} projecting onto the subset $I$ of coordinates:
	\[
		U' = \{ v \in R_U \mid \proj {\{1, \ldots, d'\} \setminus I} v = 0 \} \qquad
		V' = \{ v \in R_V \mid \proj {\{1, \ldots, d'\} \setminus I} v = 0 \}\enspace.
	\]
We call the set $U'$ (resp.~$V'$) the \emph{expansion} of $U$ (resp.~$V$).	

We say that a linear set $L = \{b\} + \lin^{\geq 0}(p_1, \ldots, p_k) \subseteq \N^{d'}$
is a \emph{$U$-witness} if $W_U$ admits runs $\rho, \rho_1, \ldots, \rho_k$ such that 
\begin{align} \label{eq:witnesscond}
\begin{aligned}
b = \target{\rho} \in U' \\
b+p_i = \target{\rho_i} \in U' & \ \  \text{ for } i\in\{1,\ldots, k\} \\
 \rho \runord \rho_i  & \ \ \text{ for } i \in \{1, \ldots, k\}\enspace.
\end{aligned}
\end{align}
Analogously one defines $V$-witnesses, but with respect to $W_V$.

%

\newcommand{\keylemma}[2]{
\begin{lemma}
	\label{#2}
	For two \vas{\sepsep}sections $U, V \subseteq \N^d$, the following conditions are equivalent:
	\begin{enumerate}
	\item $U, V$ are not #1{\sepsep}separable;
	\item the expansions $U', V'$ of $U, V$ are not #1{\sepsep}separable;
	\item there exist linear subsets $L\subseteq U'$, $M \subseteq V'$ that are not #1{\sepsep}separable;
	\item there exist a $U$-witness $L$ and a $V$-witness $M$ that are not #1{\sepsep}separable.
	\end{enumerate}
\end{lemma}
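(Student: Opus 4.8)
The plan is to prove the cycle $(4)\Rightarrow(3)\Rightarrow(2)\Rightarrow(1)\Rightarrow(4)$; all the substance lies in the last implication. The equivalence of $(1)$ and $(2)$ is immediate from the definition of the expansions: moving from $U,V$ to $U',V'$ only reintroduces coordinates pinned to $0$, and since $0\equiv_n 0$ and $0\cong_n 0$, the pair characterizations of non-separability (Proposition~\ref{prop:modular-inseparable-pairs} and Proposition~\ref{prop:inseparable-pairs}) produce congruent pairs in $U',V'$ precisely when they do in $U,V$. For $(3)\Rightarrow(2)$ I use that separability is anti-monotone: a separator of $U',V'$ separates any subsets $L\subseteq U'$, $M\subseteq V'$ as well, so non-separability of the subsets forces it for the supersets. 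For $(4)\Rightarrow(3)$ I observe that a $U$-witness $L=\{b\}+\lin^{\geq 0}(p_1,\dots,p_k)$ is contained in $U'$: from $b\in U'$ and each $b+p_i\in U'$ every period vanishes on the projected-away coordinates, and Corollary~\ref{cor:summing-runs} exhibits, for each $\delta\in\lin^{\geq 0}(p_1,\dots,p_k)$, a run whose target is $b+\delta$, hence $b+\delta\in U'$.

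For $(1)\Rightarrow(4)$, assume $U',V'$ are non-separable. The appropriate characterization --- Proposition~\ref{prop:modular-inseparable-pairs} for the modular instantiation, Proposition~\ref{prop:inseparable-pairs} for the unary one --- supplies for every $i\in\N$ runs $\sigma_i$ of $W_U$ and $\tau_i$ of $W_V$ whose targets are congruent modulo $i!$ (I take factorial moduli on purpose). Since $\runord$ is a well quasi order (Proposition~\ref{prop:run-wqo}), the componentwise order on pairs of runs is one too, and I extract an infinite index set $\mathcal I$ along which both $\sigma_i$ and $\tau_i$ are $\runord$-nondecreasing. Fixing the least $i_0\in\mathcal I$ and putting $b=\target{\sigma_{i_0}}$, $c=\target{\tau_{i_0}}$, the differences $\delta_i=\target{\sigma_i}-b$ and $\epsilon_i=\target{\tau_i}-c$ are non-negative; Lemma~\ref{lem:finite-base} applied to $\{\delta_i\}$ and $\{\epsilon_i\}$ gives finite bases $P=\{p_1,\dots,p_k\}$ and $Q=\{q_1,\dots,q_m\}$ with $\{\delta_i\}\subseteq\lin(P)$ and $\{\epsilon_i\}\subseteq\lin(Q)$. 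By Corollary~\ref{cor:summing-runs}, $L=\{b\}+\lin^{\geq 0}(P)$ and $M=\{c\}+\lin^{\geq 0}(Q)$ are a $U$-witness and a $V$-witness.

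To finish the modular instantiation I must show $L,M$ are not modular-separable, which by Lemma~\ref{lem:modular:separability:linear} amounts to $b-c\in\lin(P\cup Q)$. From $\target{\sigma_i}\equiv_{i!}\target{\tau_i}$ I read off $b-c\equiv_{i!}\epsilon_i-\delta_i$, and since $\epsilon_i-\delta_i\in\lin(P\cup Q)$ this gives $b-c\in\lin_{i!}^{\geq 0}(P\cup Q)$ for all $i\in\mathcal I$. This is where the factorials matter: for any fixed $n$ choose $i\in\mathcal I$ with $i\geq n$, so $n\mid i!$ and hence $b-c\in\lin_{i!}^{\geq 0}(P\cup Q)\subseteq\lin_n^{\geq 0}(P\cup Q)$; as $n$ is arbitrary, Lemma~\ref{lem:solution-modulo} yields $b-c\in\lin(P\cup Q)$.

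The unary instantiation needs more, because modular non-separability does not imply unary non-separability in general; the remedy is to arrange that $L,M$ are \emph{linked} and then apply Lemma~\ref{lem:F-linked}. Working with $\cong_{i!}$, the unary relation splits the coordinates into those that stay bounded --- where $\target{\sigma_i}$ and $\target{\tau_i}$ must coincide and, after refining $\mathcal I$, are constant --- and those that diverge to infinity on both sides; letting $I$ be the latter set, the common constant on the complement of $I$ is a single sectioning vector $w$. Including in $P$ and $Q$ one further difference $\delta_{i^*}$, $\epsilon_{i^*}$ that is strictly positive on every coordinate of $I$ (available since those coordinates diverge) makes $\proj{I}{L}$ and $\proj{I}{M}$ diagonal, so $L,M$ are $I$-linked; as $\cong_{i!}$ refines $\equiv_{i!}$ they are not modular-separable by the previous paragraph, and Lemma~\ref{lem:F-linked} upgrades this to non-unary-separability. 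I expect the crux to be exactly this transfer of non-separability from the infinite family of congruent target pairs to two finite linear witnesses --- the combination of Lemma~\ref{lem:finite-base} with the factorial-modulus argument --- together with, in the unary case, engineering the linking condition so that Lemma~\ref{lem:F-linked} applies.
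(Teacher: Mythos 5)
Your proof is correct, and its skeleton matches the paper's: the three easy implications are handled identically (anti-monotonicity of separability, and Corollary~\ref{cor:summing-runs} plus the vanishing of periods on projected-away coordinates to get a witness inside the expansion), and the hard implication uses the same ingredients --- factorial moduli, extraction of $\runord$-monotone subsequences of runs via Proposition~\ref{prop:run-wqo}, Lemma~\ref{lem:finite-base} to obtain finitely many periods realized by runs, and, in the unary case, the bounded/unbounded coordinate split together with Lemma~\ref{lem:F-linked}. Where you genuinely diverge is in how non-separability of the two linear witnesses is certified. The paper applies Lemma~\ref{lem:finite-base} \emph{once}, to the set of paired differences $\setof{\delta_n - \gamma_n}{n\in\N}$, so that $P$ and $Q$ stay synchronized, and then for each $n$ it explicitly constructs points $u_n'\in L$, $v_n'\in M$ with $u_n'\equiv_n v_n'$ (using only the easy inclusion $\lin \subseteq \lin_n^{\geq 0}$), concluding by Proposition~\ref{prop:modular-inseparable-pairs}. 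You instead apply Lemma~\ref{lem:finite-base} separately to $\set{\delta_i}$ and $\set{\epsilon_i}$, which destroys the pairing, and compensate algebraically: from $b-c\equiv_{i!}\epsilon_i-\delta_i\in\lin(P\cup Q)$ you get $b-c\in\lin_n^{\geq 0}(P\cup Q)$ for every $n$, hence $b-c\in\lin(P\cup Q)$ by the hard direction of Lemma~\ref{lem:solution-modulo}, and then Lemma~\ref{lem:modular:separability:linear} yields non-separability. Both routes are sound; yours buys a cleaner final step by reusing the characterization of non-separability of linear sets (which the paper proves anyway for its decision procedure), at the price of invoking the nontrivial direction of Lemma~\ref{lem:solution-modulo} a second time, whereas the paper's pairing keeps the congruent pairs concrete. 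A further small improvement on your side: in the unary case you explicitly add periods $\delta_{i^*}$, $\epsilon_{i^*}$ that are strictly positive on all diverging coordinates to guarantee diagonality of the projections, a point the paper leaves implicit when it asserts that $\proj F L$ and $\proj F M$ are diagonal.
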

}

\keylemma{modular}{lem:modular:nonsep:witness}
\begin{proof}
	Equivalence of points 1 and 2 follows by the definition of expansion. Point 4 implies 3, as a $U$-witness is necessarily
	a subset of the expansion $U'$ by Corollary~\ref{cor:summing-runs}.
	Point 3 implies 2,   
	since if two sets are separable, also subsets thereof are separable (moreover, the separator remains the same).
	It remains to show that 2 implies 4.    

	Let $U', V' \subseteq \N^{d'}$ be the expansions of two \vas{\sepsep}sections $U, V\subseteq \N^d$, as above,
	and assume that they are not modular{\sepsep}separable.
	We construct two linear sets $L, M \subseteq \N^{d'}$ constituting a $U$-witness and a $V$-witness, respectively.
	%
	By Proposition~\ref{prop:modular-inseparable-pairs},
	there exists an infinite sequence of pairs of reachable configurations
	$(u_0, v_0), (u_1, v_1), \ldots \in U' \times V'$ s.t. $u_n \equiv_n v_n$ for all $n \in \N$.
	By taking an appropriate infinite subsequence we can ensure that even $u_n \equiv_{n!} v_n$ for all $n \in \N$.
	Let us fix for every $n \in \N$ runs $\rho_n$ and $\sigma_n$ such that
		$u_n = \target{\rho_n}$ and $v_n = \target{\sigma_n}$.
	%
	Since $\runord$ is a wqo by Proposition~\ref{prop:run-wqo},
	we can extract a monotone non-decreasing subsequence,
	and thus we can ensure that even $\rho_0 \runord \rho_1 \runord \cdots$ and $\sigma_0 \runord \sigma_1 \runord \cdots$.
	Here we use the fact that $u_n \equiv_{n!} v_n$ in the original sequence,
	and thus $u_n \equiv_i v_n$ for every $i \in \{1, \dots, n\}$,
	consequently the new subsequence still has $u_n \equiv_n v_n$ for all $n\in\N$.
	%
	For all $n \in \N$, let $\delta_n := u_n - u_0$ and $\gamma_n := v_n - v_0$,
	and
	%
	%
	consider the set of corresponding differences $S_\textrm{inf} := \setof {\delta_n - \gamma_n} {n \in \N}$. 
	By Lemma~\ref{lem:finite-base},
	there exists a finite subset thereof $S := \set{\delta_{i_1} - \gamma_{i_1}, \dots, \delta_{i_k} - \gamma_{i_k}}$ 
	such that $S_\textrm{inf} \subseteq \lin(S)$,
	and thus there exist two finite subsets
	$P := \set{\delta_{i_1}, \dots, \delta_{i_k}}$  
	and
	$Q := \set{\gamma_{i_1}, \dots, \gamma_{i_k}}$  
	such that
	\begin{align}
		\label{eq:lin:subset}
		S_{\textrm{inf}} \subseteq \lin(P - Q) \subseteq \lin(P) - \lin(Q) \subseteq \lin^{\geq 0}_n(P) - \lin^{\geq 0}_n(Q)\enspace,
	\end{align}
	where the last inclusion follows from Lemma~\ref{lem:lin:comb:mod}.
	%
	%
	Let the two linear sets $L$ and $M$ be defined as
	\begin{align*}
		L &:= \set {u_0} + \lin^{\geq 0} (P)\enspace, \textrm { and } \\
		M &:= \set {v_0} + \lin^{\geq 0} (Q)\enspace.
	\end{align*}
	By the construction, $L$ is a $U$-witness and $M$ a $V$-witness. 
	It thus only remains to show that  $L$ and $M$ are not modular{\sepsep}separable.
	%
	%
	For any $n$,
	by Eq.~\ref{eq:lin:subset} we have $\delta_n - \gamma_n \equiv_n \delta_n' - \gamma_n'$
	for some $\delta_n' \in \lin^{\geq 0}(P)$ and $\gamma_n' \in \lin^{\geq 0}(Q)$.
	Consider now the two new infinite sequences $u_1',u_2',\dots \in L$ and $v_1',v_2', \dots \in M$
	defined, for every $n \geq 1$, as $u_n' := u_0 + \delta_n'$ and $v_n' := v_0 + \gamma_n'$.
	Then,
	\begin{align*}
		u_n' - v_n'	&=			(u_0 + \delta_n') - (v_0 + \gamma_n')  \\
		 			&=			(u_0 - v_0) + (\delta_n' - \gamma_n') 	&&\textrm{ (by def. of $\delta_n', \gamma_n'$) } \\
					&\equiv_n	(u_0 - v_0) + (\delta_n - \gamma_n)  \\
					&=			(u_0 + \delta_n) - (v_0 + \gamma_n)  \\
					&=			u_n - v_n \equiv_n 0							&&\textrm{ (by def. of $u_n, v_n$) },
	 \end{align*}
	and thus $u_n' \equiv_n v_n'$.
	This, thanks to the characterization of Proposition~\ref{prop:modular-inseparable-pairs},
	implies that $L$ and $M$ are not modular{\sepsep}separable.
	%
%
	%
	%
	%
	%
	\qed
\end{proof}

\begin{remark}
	Note that a modular{\sepsep}nonseparability witness exists even in the case when the two reachability sets $U, V$ have nonempty intersection. 
	In this case, it is enough to consider two runs $\rho_0$ and $\sigma_0$
	ending up in the same configuration $\target{\rho_0} = \target{\sigma_0}$,
	and considering the linear sets $L := M := \set {\target{\rho_0}}$.
\end{remark}

Using the characterization of Lemma~\ref{lem:modular:nonsep:witness}, the negative semi-decision procedure enumerates
all pairs $L, M$, where $L$ is a $U$-witness and $M$ is a $V$-witness and checks
whether $L$ and $M$ are modular{\sepsep}separable, which is decidable due to Lemma~\ref{lem:modular:separability:linear}.
Note that enumerating $U$-witnesses (and $V$-witnesses) amounts of enumerating
finite sets of runs $\{\rho, \rho_1, \ldots, \rho_k\}$ satisfying~\eqref{eq:witnesscond}.

\begin{remark}
	It is also possible to design another negative semi-decision procedure using Lemma~\ref{lem:modular:nonsep:witness}.
	This one enumerates all linear sets $L$ and $M$
	(not necessarily only those in the special form of $U$- or $V$- witnesses)
	and checks whether they are modular{\sepsep}separable
	\emph{and} included in $U$ and $V$, respectively.
	While this procedure is conceptually simpler than the one we presented,
	we now need the two extra inclusion checks $L \subseteq U$ and $M \subseteq V$.
	Indeed, $U$- and $V$-witnesses were designed in such a way that
	the two inclusions above hold by construction and do not have to be checked.
	The problem whether a given linear set is included in a given \vasreach is decidable~\cite{Leroux:PresburgerVAS:LICS2013},
	however we chose to present the previous semi-decision procedure in order to be self contained.
\end{remark}

\ignore{
\begin{lemma}
	It is decidable whether two linear sets $L, M \subseteq \N^d$ are a modular{\sepsep}nonseparability witness.
\end{lemma}

\begin{proof}
	We know that modular{\sepsep}separability of linear sets is decidable by Corollary~\ref{cor:modular:separability:linear}.
	It remains to show that it is decidable to check the inclusion $L \subseteq U$ for a linear set $L$ and a \vasreach set $U$.
	This follows from \cite{Leroux:PresburgerVAS:LICS2013}.
	%
	In particular, \cite[Lemma XI.1]{Leroux:PresburgerVAS:LICS2013} says that
	subreachability sets of a \vasreach set definable in Presburger arithmetic are flatable.
	A subreachability set is just a subset of a \vasreach set.
	Thus if $L \subseteq U$, then $L$ is a subreachability set.
	Clearly, a linear set $L$ is definable in Presburger arithmetic---%
	which are precisely the semilinear sets, of which linear sets are a special case.
	Finally, a subreachability set is \emph{flatable} if it can overaproximated by iterating \vas transitions according to a language of \vas transitions $A$
	which is \emph{bounded}, i.e., of the form $A \subseteq w_1^* \cdots w_k^*$, where $w_1, \dots, w_k$ are finite sequence of transitions.
	Thus, one decides $L \subseteq U$ as follows.
	If it does not hold, then there exists an element of $L$ which is not reachable in the \vas, which can be decided by using reachability in \vases.
	Otherwise, if $L \subseteq U$ holds,
	then by \cite[Lemma XI.1]{Leroux:PresburgerVAS:LICS2013}
	$L$ is flatable, that is, there exist finite sequences of transitions $w_1, \dots, w_k$
	s.t. $L \subseteq U'$ where $U'$ is the subreachability set which can be obtaines by first repeatedly firing only the sequence of transitions $w_1$,
	followed by a repetition of $w_2$, and so on, up to $w_k$.
	Given the special structure of $U'$,
	the inclusion $L \subseteq U'$ is expressible in Presburger arithmetic,
	and thus checking whether it holds is decidable.
	\qed
\end{proof}
}

\section{Unary{\sepsep}separability of VAS{\sepsep}sections}
\label{sec:unary-vas-sep}

We now embark on the proof of Theorem~\ref{thm:unary-vas-sep}.
It goes along the lines of the proof of Theorem~\ref{thm:modular-vas-sep},
but with some details more complicated, thus we only concentrate on explaining the necessary adjustments.
As before, the positive semi-decision procedure enumerates all $n \in \N$
and checks whether the $\cong_n$-closures of the two reachability sets are disjoint,
which is effective thanks to the following fact:
\begin{lemma}\label{lem:positive-witness}
	For two \vas{\sepsep}sections $U$ and $V$ and $n \in \N$,
	it is decidable whether there exist $u \in U$ and $v \in V$ such that $u \cong_n v$.
\end{lemma}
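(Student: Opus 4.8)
The plan is to reduce the problem to a VAS reachability question, mirroring the structure of the proof of Lemma~\ref{lem:modular-positive-witness}, but accounting for the fact that $\cong_n$ distinguishes values below the threshold $n$ from values above it. Recall that $u \cong_n v$ means that on every coordinate $i$, either $u[i] = v[i] < n$, or both $u[i], v[i] \geq n$ and $u[i] \equiv_n v[i]$. As in the modular case, I may assume (after padding, reordering, and the gadget that forces the sectioning vectors to be $0$) that $U$ and $V$ are expansions, so that deciding whether some $u \in U, v \in V$ satisfy $u \cong_n v$ is equivalent to the same question on the reachability sets $R_U, R_V$ restricted to configurations that vanish on the projected-away coordinates.

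\textbf{Handling the case split per coordinate.} The essential difference from the modular case is the branching between the ``small'' regime ($u[i] = v[i] < n$) and the ``large'' regime ($u[i], v[i] \geq n$ with $u[i] \equiv_n v[i]$). First I would guess, nondeterministically, a function $f : \{1, \ldots, d\} \to \{0, 1, \ldots, n-1, \textsf{big}\}$ that records for each (non-projected) coordinate $i$ either the common small value $f(i) < n$ that $u[i] = v[i]$ is supposed to take, or the marker $\textsf{big}$ indicating $u[i], v[i] \geq n$. There are only finitely many such functions, so the procedure may enumerate them. For a fixed $f$, I then need to decide reachability of a pair $(u,v)$ respecting $f$, and the two regimes are encoded separately: on small coordinates we demand exact reachable values $u[i] = v[i] = f(i)$, whereas on big coordinates we demand $u[i], v[i] \geq n$ together with $u[i] \equiv_n v[i]$.

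\textbf{The construction.} For the big coordinates I reuse the trick of Lemma~\ref{lem:modular-positive-witness}: augment each VAS with transitions that decrement such a coordinate by $n$, so that testing the modular constraint $u[i] \equiv_n v[i]$ reduces to checking that both modified coordinates can be driven down to a common residue in $\{0, \ldots, n-1\}$; the requirement $u[i], v[i] \geq n$ before decrementing can be enforced by first subtracting $n$ once unconditionally (and checking the coordinate stays nonnegative) on each big coordinate. For the small coordinates, I simply require that the reachable value equals $f(i)$ exactly, with no decrement transitions added. Running the two thus-modified VAS in parallel and asking whether they admit reachable vectors matching on the prescribed residues and small values is then a single instance of the VAS reachability problem, decidable by~\cite{DBLP:conf/stoc/Mayr81,DBLP:conf/lics/LerouxS15}. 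The whole procedure accepts iff some guess $f$ yields a positive reachability answer.

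\textbf{The main obstacle} I expect lies in correctly combining the two regimes into one reachability instance: the small-value constraints are equality tests (easy to enforce), but the big-value constraints require simultaneously guaranteeing a lower bound $\geq n$ \emph{and} a modular agreement $\equiv_n$, and these must hold of the \emph{original} reachable values rather than the decremented ones. Care is needed so that the decrement gadget does not accidentally let a coordinate that was genuinely below $n$ masquerade as big, nor vice versa. Once this bookkeeping is set up faithfully, correctness follows directly from the definition of $\cong_n$ and the decidability of VAS reachability.
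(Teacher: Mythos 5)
Your proof is correct, and it rests on the same reduction as the paper's: both modify the two \vases with decrement-by-$n$ transitions and then ask a \vas reachability query, exactly as in Lemma~\ref{lem:modular-positive-witness}. Where you differ is in the treatment of the threshold, which is the only new content of this lemma over the modular case. The paper handles it with one uniform tweak and no guessing: decrements by $n$ are allowed on a coordinate only when its value exceeds $2n$. Since such a decrement never takes a value below $n$, the decremented value $u'$ automatically satisfies both $u' \equiv_n u$ and $u'[i] \geq n \iff u[i] \geq n$, so a single query---do the modified \vases reach vectors that agree with the sectioning values on the fixed coordinates and are equal and smaller than $2n$ elsewhere---decides the problem. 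You instead enumerate the finitely many profiles $f$ (a small value versus ``big'', per coordinate) and enforce the two regimes separately, certifying bigness by a mandatory subtraction of $n$. This is sound: after postponing all decrements to the end of the run (legitimate by monotonicity of \vas steps, since decrements only lower intermediate configurations), the forced subtraction witnesses $u[i] \geq n$ and the common residue witnesses $u[i] \equiv_n v[i]$. The price of your route is $(n+1)^{d}$ reachability queries instead of one (immaterial for decidability) plus one implementation detail you gloss over: a plain \vas cannot force a transition to fire, so the ``subtract $n$ once'' step needs a flag coordinate fixed by the section, or a \vass state. To be fair, the paper's conditional decrement (``only if the value is above $2n$'') requires an analogous gadget, so your sketch is at the same level of rigor as the paper's own two-line proof.
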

This can be proved in a way similar to Lemma~\ref{lem:modular-positive-witness},
with the adjustment that we allow on every coordinate a decrement by $n$ only if the value is above $2n$.

The negative semi-decision procedure enumerates nonseparability witnesses, along the same lines as in the case of
modular{\sepsep}separability.
The following crucial lemma is an exact copy of Lemma~\ref{lem:modular:nonsep:witness},
except that ``modular'' is replaced by ``unary'':
%
%
%
\keylemma{unary}{lem:unary:nonsep:witness}
\begin{proof}
	%
	We only concentrate on showing that 2 implies 4.
	Assume that the expansions $U'$ and $V'$ are not unary{\sepsep}separable, for
	two sections $U$ and $V$ represented as (recall the simplifying assumptions about \vas{\sepsep}sections 
	from Section~\ref{sec:modular-vas-sep})
	\[U = \sec {I, 0} {R_U} \subseteq \N^d \quad \textrm{ and } \quad V = \sec {I, 0} {R_V} \subseteq \N^d\enspace,\]
	where $R_U, R_V \subseteq \N^{d'}$ are the reachability sets
	of  two \vases 
	and $I \subseteq \{1, \ldots, d'\}$ 
	with $|I| = d$ are projecting coordinates.
	%
	Since $U'$ and $V'$ are not unary{\sepsep}separable,
	by Proposition~\ref{prop:inseparable-pairs},
	there exists an infinite sequence of pairs of reachable configurations $(u_0, v_0), (u_1, v_1), \ldots \in U' \times V'$
	s.t. $u_n \cong_n v_n$ for all $n \in \N$.
	It means that for every $n \in \N$ there exist runs $\rho_n$ and $\sigma_n$ in the two VASes
	ending up in reachable configurations $u_n := \target{\rho_n} \in R_U$ and $v_n := \target{\sigma_n} \in R_V$.
	%
	Define $\delta_n := u_n - u_0$ and $\gamma_n := v_n - v_0$ for all $n \in \N$.
	Since $\runord$ is a wqo,
	by reasoning as in the proof of Lemma~\ref{lem:modular:nonsep:witness},
	we can assume w.l.o.g. that $\rho_0 \runord \rho_1 \runord \cdots$,
	and similarly for the $\sigma_i$'s.

	Since $u_n \cong_n v_n$,
	the two sequences $u_0 \leq u_1 \leq \cdots$ and $v_0 \leq v_1 \leq \cdots$
	are unbounded on the same set of coordinates.
	Let $F \subseteq \{1, \ldots, d'\}$ be this set; note that $F \subseteq I$ .
	By eliminating a sufficiently long prefix of these two sequences,
	we can further assume that bounded coordinates are in fact constant,
	and again from $u_n \cong_n v_n$ it follows that this constant is the same vector for both sequences.
	Consequently, 
	\begin{align}
		\label{eq:B1}
		&\proj {\set{ 1, \dots, d' } \setminus F} {u_0} = \proj {\set{ 1, \dots, d' } \setminus F} {v_0}\enspace, \textrm{ and } \\
		\label{eq:B2}
		&\prettyforall {n \in \N}
		{\proj {\set{ 1, \dots, d' } \setminus F} {\delta_n} = \proj {\set{ 1, \dots, d' } \setminus F} {\gamma_n} = 0}\enspace.
	\end{align}

	By proceding as in the proof of Lemma~\ref{lem:modular:nonsep:witness},
	there exist two finite sets
	$P := \set{\delta_{i_1}, \dots, \delta_{i_k}}$ and
	$Q := \set{\gamma_{i_1}, \dots, \delta_{i_k}}$ such that
	the linear sets
	$L := \set {u_0} + \lin^{\geq 0} (P) \subseteq U$ is a $U$-witness, the linear set 
	$M := \set {v_0} + \lin^{\geq 0} (Q) \subseteq V$ is a $V$-witness, and $L, M$ are not modular{\sepsep}separable.
	%
	It remains to show that $L$ and $M$ are not \emph{unary} separable either.
	While unary{\sepsep}nonseparability is a stronger property than modular{\sepsep}nonseparability in general,
	by Lemma~\ref{lem:F-linked} the two conditions are in fact equivalent when the two sets are \emph{linked}.
	We make use of the set $F$ as chosen before, and we show that $L$ and $M$ are $F$-linked.
	Indeed, if $j \in F$ then w.l.o.g.~we may assume that the two sequences
	$\proj j {u_0} < \proj j {u_1} < \ldots$ and
	$\proj j {v_0} < \proj j {v_1} < \ldots$ are strictly increasing.
	Thus, $\proj j {\delta_n}, \proj j {\gamma_n} > n$ for every $n \in \N$,
	which implies that $\proj F L$ and $\proj F M$ are diagonal.
	On the other hand, if $j \in \{1, \dots, d' \} \setminus F$,
	from properties~\eqref{eq:B1} and~\eqref{eq:B2} above, we have
	$\proj {\{1, \dots, d\} \setminus F} L = \proj {\{1, \dots, d\} \setminus F} M = \set {\proj {\{1, \dots, d\} \setminus F} {u_0}}$.
	Thus $L$ and $M$ are indeed $F$-linked.
	%
	\qed
\end{proof}



\section{Final remarks}\label{sec:conclusions}

We have shown decidability of modular\sepsep and unary{\sepsep}separability for sections of \vasreach sets,
which include (sections of) reachability sets of \vases with states and Petri nets.
As a corollary, we have derived decidability of regular{\sepsep}separability of commutative closures of \vas languages,
and of commutative regular{\sepsep} separability of \vas languages.
The decidability status of the regular{\sepsep}separability problem for \vas languages remains an intriguing open problem.

\paragraph{Complexity.}
Most of the problems shown decidable in this paper are easily shown to be at least as hard as the \vasreach problem.
In particular, this applies to unary{\sepsep}separability of \vasreach sets,
and to regular{\sepsep}separability of commutative closures of \vaslangs.
Indeed, for unary{\sepsep}separability, it suffices to notice that a configuration $u$ cannot reach a configuration $v$ if, and only if,
the set reachable from $u$ can be unary{\sepsep}separated from the singleton set $\set v$,
also a \vasreach set.
When the separator exists, it can be taken to be the complement of $\set v$ itself,
which is unary.  

While the problem of modular{\sepsep}separability is \textsc{ExpSpace}-hard,
we do not know whether it is as hard as the \vasreach problem.
The hardness can be shown by reduction from the control state reachability problem in \vasses, which is 
\textsc{ExpSpace}-hard~\cite{Lipton76}. For a \vass $V$ and a target control state $q$ thereof,
we construct two new \vasses $V_0$ and $V_1$,
which are copies of $V$ with one additional coordinate,
which at the beginning is zero for $V_0$ and one for $V_1$.
We also add one new transition from control state $q$,
which allows $V_1$ to decrease the additional coordinate by one.
One can easily verify that the two \vass reachability sets definable by $V_0$ and $V_1$ are modular{\sepsep}separable
if, and only if, the control state $q$ is not reachable in $V$, which finishes the proof of \textsc{ExpSpace}-hardness.

\ignore{
	\paragraph{Finite unions.}
	Unary{\sepsep}separability of semilinear sets is decidable is shown decidable in~\cite{DBLP:journals/ipl/ChoffrutG06}.
	We claim that our result on unary{\sepsep}separability subsumes this result.
	It is readily seen that all the linear sets are \vasreach sets; this is however not true for all semilinear sets, for instance the finite set
	$\{1, 2\} \subseteq \N$ is semilinear, but not a \vasreach set.
	One can show that sections of \vasreach sets are closed under finite unions, and thus include all semilinear sets. 
	Alternatively, one obtains decidability of modular\sepsep and unary{\sepsep}separability for finite unions of sections of \vasreach sets,
	by the following generic argument:

	\begin{lemma}\label{lem:finite-unions}
		Let $\F$ be a family of sets closed under finite unions and intersections.
		Then $U = U_1 \cup \ldots \cup U_k$ and $V = V_1 \cup \ldots \cup V_\ell$ are $\F${\sepsep}separable
		if, and only if, for every $i \in \{1, \ldots, k\}$ and $j \in \{1, \ldots, \ell\}$,
		the sets $U_i$ and $V_j$ are $\F${\sepsep}separable.
	\end{lemma}

	\begin{proof}
		The ``only if'' direction is immediate,
		since a separator $S \in \F$ of $U$ and $V$
		is also a separator of $U_i$ and $V_j$ for all $i \in \{1, \ldots, k\}$, $j \in \{1, \ldots, \ell\}$.
		For the ``if'' direction, let $S_{i,j} \in \F$ separate $U_i$ and $V_j$,
		and let $S_i = S_{i,1} \cap \cdots \cap S_{i, \ell}$,
		which is in $\F$ by the assumption on $\F$.
		Observe that $U_i \subseteq S_i$ and $S_i \cap V = \emptyset$,
		thus $S_i$ separates $U_i$ and $V$.
		Let now $S = S_1 \cup \cdots \cup S_k$, which is still in $\F$.
		We have $U \subseteq S$ and $V \cap S = \emptyset$,
		thus $U$ and $V$ are $\F${\sepsep}separable.
		\qed
	\end{proof}
}

\paragraph{The unarity and modularity characterization problems.}
Closely related problems to separability are the modularity and unarity characterization problems:
is a given section of a \vas{\sepsep}reachability set modular, resp., unary?
We focus here on the unarity problem, but the other one can be dealt in the same way.
Decidability of the unarity problem would follow immediately from  Theorem~\ref{thm:unary-vas-sep}, if sections of \vasreach sets were 
(effectively) closed under complement.
This is however not the case.
Indeed, if the complement of a \vasreach set is a section of another \vasreach set,
then both sets are necessarily a section of a Presburger invariant~\cite{DBLP:conf/lics/Leroux09}, hence semilinear.
But we know that \vasreach sets can be non-semilinear,
and thus they are not closed under complement. 
However, the unarity problem can be shown to be decidable directly, at least for \vasreach sets, by using the following two facts: 
first, it is decidable if a given \vasreach set $U$ is semilinear (see the unpublished works~\cite{Hauschildt90,Lambert94});
second, when a \vasreach set is semilinar, a concrete representation thereof as a semilinear set is effectively computable~\cite{L13}.
Indeed, if a given $U$ is not semilinear, it is not unary either;
otherwise, compute a semilinear representation, and check if it is unary.
The latter can be checked directly,
or can be reduced to unary{\sepsep}separability of semilinear sets (since semilinear sets are closed under complement, as discussed above).

\paragraph{Acknowledgements}
We thank Maria Donten-Bury for providing us elegant proofs of Lemmas~\ref{lem:solution-modulo} and \ref{lem:finite-base}.

\bibliographystyle{plain}
\bibliography{citat}

\end{document}